\newtheorem{lemma}{Lemma}
\newtheorem{proposition}{Proposition}
\newtheorem{theorem}{Theorem}
\newtheorem{corollary}{Corollary}
\theoremstyle{definition}
\newtheorem{definition}{Definition}
\theoremstyle{remark}
\newtheorem{remark}{Remark}
\newtheorem{example}{Example}
\newcommand{\hh}{\mathcal{H}}
\newcommand{\ptu}{\mathsf{ptu}\hspace{0.2mm}(\hh)}
\newcommand{\cptu}{\mathsf{cptu}\hspace{0.2mm}(\hh)}
\newcommand{\poma}{\mathscr{P}}
\newcommand{\copoma}{\mathscr{K}}
\newcommand{\hbo}{\boldsymbol{h}}
\newcommand{\hon}{h_1}
\newcommand{\htw}{h_2}
\newcommand{\htr}{h_3}
\newcommand{\hj}{h_j}
\newcommand{\kajk}{\varkappa_{jk}}
\newcommand{\kakj}{\varkappa_{kj}}
\newcommand{\kao}{\varkappa_{11}}
\newcommand{\kat}{\varkappa_{22}}
\newcommand{\katr}{\varkappa_{33}}
\newcommand{\kaot}{\varkappa_{12}}
\newcommand{\kaotr}{\varkappa_{13}}
\newcommand{\kattr}{\varkappa_{23}}
\newcommand{\ko}{\kappa_{1}}
\newcommand{\kt}{\kappa_{2}}
\newcommand{\ktr}{\kappa_{3}}
\newcommand{\lcs}{\mathcal{X}}
\newcommand{\bores}{\mathscr{E}}
\newcommand{\clco}{\overline{\mathrm{co}}\hspace{0.3mm}}
\newcommand{\opu}{\hat{U}}
\newcommand{\opua}{\hat{U}^\ast}
\newcommand{\traspo}{\mathfrak{T}\hspace{0.3mm}}
\newcommand{\lino}{\mathfrak{E}}
\newcommand{\oba}{\Psi}
\newcommand{\vjklm}{v_{\hspace{-0.3mm}jklm}(\oba;g)}
\newcommand{\vjkkj}{v_{\hspace{-0.3mm}jkkj}(\oba;g)}
\newcommand{\unihh}{\mathrm{U}\hspace{0.1mm}(\hh)}
\newcommand{\sunig}{\mathrm{SU}\hspace{0.1mm}}
\newcommand{\unig}{\mathrm{U}\hspace{0.1mm}}
\newcommand{\rot}{\mathrm{SO}\hspace{0.1mm}(3)}
\newcommand{\repr}{\mathsf{V}}
\newcommand{\tr}{\mathrm{tr}}
\newcommand{\diag}{\hspace{0.3mm}\mathsf{diag}\hspace{0.2mm}}
\newcommand{\gak}{\gamma_k}
\newcommand{\gao}{\gamma_1}
\newcommand{\gat}{\gamma_2}
\newcommand{\gatr}{\gamma_3}
\newcommand{\trasp}{^{\hspace{-0.2mm}\mbox{\tiny $\mathsf{T}$}}}
\newcommand{\rcub}{\mathbb{R}^3}
\newcommand{\ccub}{\mathbb{C}^3}
\newcommand{\scapro}{\langle\cdot\hspace{0.5mm},\cdot\rangle}
\newcommand{\ranglehs}{\rangle_{\mbox{\tiny \rm HS}}}
\newcommand{\hspro}{\scapro_{\mbox{\tiny \rm HS}}}
\newcommand{\norhs}{\|_{\mbox{\tiny \rm HS}}}
\newcommand{\rangler}{\rangle_{\rcub}\hspace{-0.3mm}}
\newcommand{\ranglec}{\rangle_{\ccub}\hspace{-0.3mm}}
\newcommand{\zzz}{\mathbf{0}}
\newcommand{\spb}{\boldsymbol{\hat{S}}}
\newcommand{\spz}{\hat{S}_0}
\newcommand{\spo}{\hat{S}_1}
\newcommand{\spt}{\hat{S}_2}
\newcommand{\sptr}{\hat{S}_3}
\newcommand{\spj}{\hat{S}_j}
\newcommand{\spk}{\hat{S}_k}
\newcommand{\spl}{\hat{S}_l}
\newcommand{\siz}{\hat{\sigma}_0}
\newcommand{\sio}{\hat{\sigma}_1}
\newcommand{\sit}{\hat{\sigma}_2}
\newcommand{\sitr}{\hat{\sigma}_3}
\newcommand{\aaa}{\boldsymbol{a}}
\newcommand{\bbb}{\boldsymbol{b}}
\newcommand{\rrb}{\boldsymbol{r}}
\newcommand{\ru}{r_1}
\newcommand{\rtr}{r_3}
\newcommand{\az}{a_0}
\newcommand{\ao}{a_1}
\newcommand{\at}{a_2}
\newcommand{\atr}{a_3}
\newcommand{\bz}{b_0}
\newcommand{\mase}{\mathscr{D}(t)}
\newcommand{\mage}{\mathscr{L}}
\newcommand{\rema}{\Lambda}
\newcommand{\remat}{\Lambda\trasp}
\newcommand{\pdm}{\mathfrak{D}}
\newcommand{\pds}{\pdm_t}
\newcommand{\coco}{\hat{J}}
\newcommand{\defi}{\mathrel{\mathop:}=}
\newcommand{\ifed}{=\mathrel{\mathop:}}
\newcommand{\defar}{\overset{\mathrm{def}}{\Longleftrightarrow}}
\newcommand{\dime}{\mathtt{N}}
\newcommand{\quz}{q(\dime)}
\newcommand{\simpl}{\Delta_n}
\newcommand{\simplmo}{\Delta_{n-1}}
\newcommand{\noappro}{\not\approx}
\newcommand{\fun}{\mathscr{F}}
\newcommand{\acca}{\mathsf{h}}
\newcommand{\gi}{\mathsf{g}}
\newcommand{\go}{\gi_1}
\newcommand{\gj}{\gi_j}
\newcommand{\gn}{\gi_n}
\newcommand{\ide}{\mathrm{Id}}
\newcommand{\id}{\hat{I}}
\newcommand{\errep}{\mathbb{R}^{\mbox{\tiny $+$}}\hspace{-0.5mm}}
\newcommand{\erreps}{\mathbb{R}_{\hspace{0.3mm}\ast}^{\mbox{\tiny $+$}}\hspace{-0.5mm}}
\newcommand{\erren}{\mathbb{R}^n}
\newcommand{\entro}{\mathscr{E}}
\newcommand{\pap}{\mathsf{p}}
\newcommand{\norp}{\|_{\pap}}
\newcommand{\norpp}{\|_{[\pap]}}
\newcommand{\norpo}{\|_{[1]}}
\newcommand{\noro}{\|_{1}}
\newcommand{\nori}{\|_{\infty}}
\newcommand{\nep}{\mathscr{A}_{\pap}}
\newcommand{\neinf}{\mathscr{A}_{\infty}}
\newcommand{\alp}{\alpha_{\pap}}
\newcommand{\alinf}{\alpha_{\infty}}
\newcommand{\pkp}{p_{k}^{\hspace{0.3mm}\pap}}
\newcommand{\supro}{\mathfrak{P}}
\newcommand{\suproz}{\mathfrak{P}_0}
\newcommand{\suprop}{\mathfrak{P}^\perp}
\newcommand{\orthp}{\hat{P}}
\newcommand{\orthq}{\hat{Q}}
\newcommand{\orthk}{\hat{P}_k^{\phantom{\ast}}}
\newcommand{\stah}{\mathcal{S}(\hh)}
\newcommand{\hrho}{\hat{\rho}}
\newcommand{\hrhoq}{\hrho^{\hspace{0.4mm}q}}
\newcommand{\hrhot}{\hrho^{\hspace{0.4mm}2}}
\newcommand{\mms}{\hrho_{\star}}
\newcommand{\ho}{\hat{\omega}}
\newcommand{\asy}{\hrho_{\infty}}
\newcommand{\unop}{\hat{U}}
\newcommand{\unopn}{\hat{U}^{\hspace{0.3mm}n}}
\newcommand{\unopa}{\hat{U}^{\ast}}
\newcommand{\unopan}{\hat{U}^{\ast\hspace{0.2mm}n}}
\newcommand{\vne}{\mathscr{S}}
\newcommand{\tseq}{\mathscr{T}_q}
\newcommand{\tseo}{\mathscr{T}_1}
\newcommand{\tset}{\mathscr{T}_2}
\newcommand{\req}{\mathscr{R}_q}
\newcommand{\reqo}{\mathscr{R}_1}
\newcommand{\reqt}{\mathscr{R}_2}
\newcommand{\roq}{\varrho_q}
\newcommand{\roo}{\varrho_1}
\newcommand{\teq}{\tau_q}
\newcommand{\teo}{\tau_1}
\newcommand{\proj}{\hat{\psi}_{\hspace{-0.2mm}j}^{\phantom{q}}}
\newcommand{\prok}{\hat{\psi}_k^{\phantom{q}}}
\newcommand{\vproj}{\hat{\phi}_j^{\phantom{q}}}
\newcommand{\xo}{x_1^{\phantom{q}}}
\newcommand{\xk}{x_k^{\phantom{q}}}
\newcommand{\xn}{x_n^{\phantom{q}}}
\newcommand{\vex}{\vec{x}}
\newcommand{\vexo}{\vex_1^{\phantom{q}}}
\newcommand{\vexn}{\vex_n^{\phantom{q}}}
\newcommand{\vexk}{\vex_k^{\phantom{q}}}
\newcommand{\vey}{\vec{y}}
\newcommand{\veyk}{\vey_k^{\phantom{q}}}
\newcommand{\pk}{p_k^{\phantom{q}}}
\newcommand{\pkq}{p_{k}^q}
\newcommand{\pj}{p_j^{\phantom{q}}}
\newcommand{\po}{p_1^{\phantom{q}}}
\newcommand{\pt}{p_2^{\phantom{q}}}
\newcommand{\pd}{p_{\dime}^{\phantom{q}}}
\newcommand{\lpj}{{}^{l\hspace{-0.3mm}}p_j^{\phantom{q}}}
\newcommand{\tpj}{\tilde{p}_j^{\phantom{q}}}
\newcommand{\tpk}{\tilde{p}_k^{\phantom{q}}}
\newcommand{\tpo}{\tilde{p}_1^{\phantom{q}}}
\newcommand{\tpd}{\tilde{p}_{\dime}^{\phantom{q}}}
\newcommand{\rk}{r_{\hspace{-0.2mm}k}^{\phantom{q}}}
\newcommand{\ro}{r_{\hspace{-0.2mm}1}^{\phantom{q}}}
\newcommand{\rt}{r_{\hspace{-0.2mm}2}^{\phantom{q}}}
\newcommand{\rn}{r_{\hspace{-0.2mm}n}^{\phantom{q}}}
\newcommand{\rkq}{r_{\hspace{-0.2mm}k}^q}
\newcommand{\sk}{s_{k}^{\phantom{q}}}
\newcommand{\so}{s_{1}^{\phantom{q}}}
\newcommand{\sn}{s_{n}^{\phantom{q}}}
\newcommand{\vep}{\vec{p}\hspace{0.5mm}}
\newcommand{\veprho}{\vep(\hrho)}
\newcommand{\vepkrho}{\vec{p}_k^{\phantom{q}}(\hrho)}
\newcommand{\vepho}{\vep(\ho)}
\newcommand{\bm}{\mathsf{M}}
\newcommand{\cml}{\mathsf{m}_l}
\newcommand{\perm}{\mathsf{P}}
\newcommand{\perml}{{}^{l\hspace{-0.2mm}}\mathsf{P}}
\newcommand{\permljk}{{}^{l\hspace{-0.2mm}}\mathsf{P}_{\hspace{-0.6mm}j,k}}
\newcommand{\qdm}{\mathfrak{Q}}
\newcommand{\qds}{\qdm_t}
\newcommand{\qdss}{\qdm_s}
\newcommand{\qdsts}{\qdm_{t+s}}
\newcommand{\opf}{\hat{F}}
\newcommand{\opfk}{\opf_k^{\phantom{\ast}}}
\newcommand{\opfka}{\opf_k^\ast}
\newcommand{\opfo}{\opf_{1}^{\phantom{\ast}}}
\newcommand{\opfd}{\opf_{\dime^2-1}^{\phantom{\ast}}}
\newcommand{\opk}{\hat{K}}
\newcommand{\opkk}{\opk_k^{\phantom{\ast}}}
\newcommand{\opkj}{\opk_j^{\phantom{\ast}}}
\newcommand{\opkka}{\opk_k^\ast}
\newcommand{\opkja}{\opk_j^\ast}
\newcommand{\opko}{\opk_{1}^{\phantom{\ast}}}
\newcommand{\opkm}{\opk_{m}^{\phantom{\ast}}}
\newcommand{\opl}{\hat{L}}
\newcommand{\oplk}{\opl_k^{\phantom{\ast}}}
\newcommand{\oplks}{\opl_k^{2}}
\newcommand{\oplo}{\opl_{1}^{\phantom{\ast}}}
\newcommand{\opld}{\opl_{\dime^2-1}^{\phantom{\ast}}}
\newcommand{\den}{\mathscr{N}}
\newcommand{\uj}{\hat{U}_{\hspace{-0.3mm}j}^{\phantom{\ast}}}
\newcommand{\ujast}{\hat{U}_{\hspace{-0.3mm}j}^{\ast}}
\newcommand{\uo}{\hat{U}_{1}^{\phantom{\ast}}}
\newcommand{\un}{\hat{U}_{\hspace{-1.2mm}\den}^{\phantom{\ast}}}
\newcommand{\um}{\hat{U}_{\hspace{-0.3mm}m}(t)}
\newcommand{\umast}{\hat{U}_{\hspace{-0.3mm}m}(t)^{\ast}}
\newcommand{\wem}{w_m(t)}
\newcommand{\fin}{\hspace{0.5mm}}
\newcommand{\spa}{\hspace{-2mm}}
\newcommand{\tre}{\hspace{0.3mm}}
\newcommand{\quattro}{\hspace{0.4mm}}
\newcommand{\cinque}{\hspace{0.5mm}}
\newcommand{\sei}{\hspace{0.6mm}}
\newcommand{\sette}{\hspace{0.7mm}}
\newcommand{\otto}{\hspace{0.8mm}}
\newcommand{\mtre}{\hspace{-0.3mm}}
\newcommand{\mcinque}{\hspace{-0.5mm}}
\newcommand{\msei}{\hspace{-0.6mm}}
\newcommand{\motto}{\hspace{-0.8mm}}
\newcommand{\mdieci}{\hspace{-1mm}}
\newcommand{\mdodici}{\hspace{-1.2mm}}
\newcommand{\erre}{\mathbb{R}}
\newcommand{\ccc}{\mathbb{C}}
\newcommand{\de}{\mathrm{d}}
\newcommand{\dert}{\frac{\de\phantom{t}}{\de t}}
\newcommand{\eee}{\mathrm{e}}
\newcommand{\bsp}{\mathfrak{X}}
\newcommand{\csg}{\mathfrak{C}}
\newcommand{\opa}{\hat{A}}
\newcommand{\opb}{\hat{B}}
\newcommand{\dom}{\mathrm{Dom}}
\newcommand{\oph}{\hat{H}}
\newcommand{\gene}{\mathfrak{L}}
\newcommand{\genea}{\mathfrak{L}^{\ast}}
\newcommand{\bH}{\mathcal{B}(\mathcal{H})}
\newcommand{\convo}{\hspace{-0.5mm}\circledcirc\hspace{-0.3mm}}
\newcommand{\mut}{\mu_t}
\newcommand{\nut}{\nu_t}
\newcommand{\su}{\varsigma}
\newcommand{\sua}{\hspace{0.3mm}|\su|}
\newcommand{\supo}{\varsigma_{\mbox{\tiny $+$}\hspace{-0.2mm}}}
\newcommand{\sune}{\varsigma_{\mbox{\tiny $-$}\hspace{-0.2mm}}}
\newcommand{\sut}{\varsigma_t}
\newcommand{\sus}{\varsigma_s}
\newcommand{\sust}{\varsigma_{s+t}}
\newcommand{\suta}{\hspace{0.3mm}|\sut|}
\newcommand{\suz}{\varsigma_0}
\newcommand{\ima}{\mathrm{i}}
\newcommand{\cpm}{\mathfrak{F}}
\newcommand{\cpmo}{\mathfrak{F}_1^{\phantom{\ast}}}
\newcommand{\cpmt}{\mathfrak{F}_2^{\phantom{\ast}}}
\newcommand{\cpmoa}{\mathfrak{F}_1^{\ast}}
\newcommand{\cpmta}{\mathfrak{F}_2^{\ast}}
\newcommand{\sopa}{\mathfrak{A}}
\newcommand{\trc}{\mathcal{B}_1(\mathcal{H})}
\newcommand{\slim}{\mbox{s\hspace{0.3mm}-}\hspace{-0.7mm}\lim}
\newcommand{\randu}{\mathfrak{U}\hspace{0.3mm}}
\newcommand{\randut}{\mathfrak{V}\hspace{0.3mm}}
\begin{document}

\title{Characterizing the dynamical semigroups that do not decrease a quantum entropy}

\author{
Paolo Aniello$^1$ and Dariusz Chru\'sci\'nski$^2$
\vspace{2mm} \\
\small \it $^1$Dipartimento di Fisica ``Ettore Pancini'', Universit\`a di Napoli ``Federico II'',
\\ \small \it and Istituto Nazionale di Fisica Nucleare (INFN), Sezione di Napoli,
\\ \small \it Complesso Universitario di Monte S.\ Angelo, via Cintia, I-80126 Napoli, Italy
\vspace{1mm} \\ \small \it $^2$Institute of Physics, Faculty of Physics, Astronomy and Informatics
\\ \small \it Nicolaus Copernicus University
\\ \small \it Grudziadzka 5, 87–100 Toru\'n, Poland
}

\date{}

\maketitle

\begin{abstract}
\noindent  In finite dimensions, we provide characterizations of the quantum dynamical semigroups
that do not decrease the von~Neumann, the Tsallis and the R\'enyi entropies, as well as a family
of functions of density operators strictly related to the Schatten norms. A few remarkable consequences
--- in particular, a description of the associated infinitesimal generators --- are derived, and
some significant examples are discussed. Extensions of these results to semigroups of trace-preserving positive
(i.e., not necessarily completely positive) maps and to a more general class of quantum entropies are also considered.
\end{abstract}

\section{Introduction}
\label{intro}

Entropy is one of the most fundamental and ubiquitous concepts in science.
In particular, it plays a central role in the theory of open quantum systems~{\cite{Breuer}}
and in quantum information theory~{\cite{Nielsen}}, where various entropy functions can be considered;
e.g., the von~Neumann, the Tsallis and the R\'enyi entropies (see~{\cite{Bengtsson}} and references therein).
An interesting problem is to characterize
the temporal evolution of a certain quantum entropy; e.g., to ascertain whether a certain class
of dynamics of open quantum systems does not decrease this quantity (for all initial states).

As is well known, under certain assumptions the evolution of an open quantum system
can be described by a suitable class of semigroups of operators; i.e., the class of
\emph{quantum dynamical semigroups}~{\cite{Breuer,Holevo,Alicki}}. A natural problem is then to
characterize the subclass of quantum dynamical semigroups (say, for a finite-dimensional quantum system)
that do not decrease a quantum entropy. A problem of this kind has been investigated,
for the von~Neumann entropy and using a computational approach, by Banks, Susskind and Peskin~{\cite{Banks}}
(on the base of a proposal of Hawking), inspired by the idea that
in order to accommodate gravity in the context of quantum theory one may allow
pure states --- actually, of a \emph{closed} system --- to evolve into mixed states,
so envisioning possible modifications of standard quantum mechanics.

When approaching this problem, one may adopt two equivalent points of view:
to study the properties of quantum dynamical semigroups themselves (`integrated approach')
or of their infinitesimal generators (`master equation approach'). The generators
of quantum dynamical semigroups admit a complete classification, the
so called Gorini-Kossakowski-Lindblad-Sudarshan canonical form~{\cite{Gorini,Lindblad}}.
Therefore, in the master equation approach solving the problem amounts to characterizing,
within this general classification, the typical form of those generators
that give rise to a temporal evolution enjoying the aforementioned property.

In this paper, we will switch back and forth between the two points of view.
We will adopt the integrated approach first, obtaining various characterizations
of those (finite-dimensional) quantum dynamical semigroups that
do not decrease the von~Neumann, the Tsallis and the R\'enyi entropies, as well as
a family of functions of density operators which are directly related to the Schatten norms.
The simplest characterization is given by the --- both necessary and sufficient --- condition of
unitality of the maps forming the dynamical semigroups.
As a byproduct, a characterization of the associated infinitesimal generators
will be then easily derived.

It is worth observing that the class of quantum dynamical semigroups
that do not decrease the mentioned families of quantum entropies
turns out to contain, as a remarkable subclass, the so-called (finite-dimensional)
\emph{twirling semigroups}~{\cite{Aniello1,Aniello2,Aniello3,Aniello-OvF,AnielloCQS,AnielloPT,AnielloFPTB}}.
These semigroups of operators are characterized by a suitable integral
expression involving a representation of a locally compact group and
a convolution semigroup of probability measures on that group.
We will argue that actually, in the case where the Hilbert space of the
quantum system is two-dimensional, the twirling semigroups are
\emph{precisely} the quantum dynamical semigroups that do not decrease
these quantum entropies (since the former class of quantum dynamical
semigroups --- and, in dimension two, the latter as well ---
can be shown to coincide with the class of random unitary semigroups).

These results admit various --- more or less straightforward --- generalizations.
First, one can relax the requirement that the semigroups of operators considered
be completely positive; precisely, one can deal, more generally, with semigroups of trace-preserving,
positive linear maps. In this regard, we stress that, although complete positivity
is often regarded as a fundamental property, its justification on the physical ground
is controversial~{\cite{Shaji}}. An interesting fact regarding the dynamical semigroups, in this
more general class, that do not decrease the mentioned quantum entropies is
that they admit an integral expression which can be regarded as a generalization
of the formula characterizing the twirling semigroups (in the general case,
semigroups of probability measures are replaced with certain families
of signed measures). Moreover, the associated infinitesimal generators form a suitable convex cone,
which, in the case where the Hilbert space of the quantum system is two-dimensional,
will be described in detail.

Another interesting generalization stems from the observation that
that a certain kind of quantum entropies verifying two fundamental axioms
(see sect.~{\ref{conclusions}}) are not decreased by the same class of
semigroups of trace-preserving positive maps considered above.

The paper is organized as follows. In sect.~{\ref{basic}}, we recall some basic facts,
fix the notations and define the classes of quantum entropies that are considered
in the paper. Next, in sect.~{\ref{main}}, we prove the main results concerning the
completely positive dynamical semigroups, and we provide some significant examples.
Extensions of these results to semigroups of trace-preserving, positive --- but not necessarily completely
positive --- maps are investigated in sect.~{\ref{further}}, and the case of a qubit system
is studied in detail. Finally, in sect.~{\ref{conclusions}}, a few conclusions are drawn,
and a general kind of quantum entropies that are not decreased precisely by the class of
dynamical semigroups considered in the paper is defined.

\section{Basic facts and notations}
\label{basic}

In this section, we will establish the main notations, and we will briefly recall some
basic facts that will be useful in the rest of the paper, including the
definition of the quantum entropies we are interested in and some of their salient
properties.

Let $\bsp$ be a separable real or complex Banach space. Then, denoting by
$\errep$ the set of non-negative real numbers (we also set
$\erreps\equiv\errep\smallsetminus\{0\}$), a family
$\{\csg_t\}_{t\in\errep}$ of bounded linear operators in $\bsp$ is
said to be a (continuous) \emph{semigroup of operators} if the following conditions
are satisfied~\cite{Hille,Yosida-book}:
\begin{enumerate}
\item $\csg_t\cinque \csg_s = \csg_{t+s}$,
$\forall\quattro t,s\ge0$;

\item $\csg_0 = \ide$;

\item $\lim_{t\downarrow 0}\|\csg_t\cinque\zeta -\zeta\|=0$,
$\forall\quattro\zeta\in\bsp$; i.e., $\slim_{t\downarrow
0}\csg_t=\ide$ \  (strong right continuity at $t=0$).
\end{enumerate}
The semigroup of operators $\{\csg_t\}_{t\in\errep}$ is completely characterized by its ---
in general, densely defined --- \emph{infinitesimal generator}; namely, by the closed linear
operator $\sopa$ in $\bsp$ defined by
\begin{equation}
\dom(\sopa)\defi\Big\{\zeta\in\bsp\colon \exists\cinque
\lim_{t\downarrow 0}
t^{-1}\big(\csg_t\cinque\zeta-\zeta\big)\Big\} \fin ,\ \ \
\sopa\cinque\zeta \defi \lim_{t\downarrow 0}
t^{-1}\big(\csg_t\cinque\zeta-\zeta\big) \fin ,\
\forall\quattro\zeta\in\dom(\sopa) \fin .
\end{equation}

In this paper, we will consider the special class of semigroups of
operators consisting of \emph{quantum dynamical maps} (or quantum channels) in $\trc$,
the Banach space of trace class operators in a complex separable Hilbert space $\hh$;
namely, the class of \emph{quantum dynamical semigroups}~{\cite{Breuer,Holevo,Alicki}}
--- or `completely positive dynamical semigroups' --- acting on $\trc$.
Clearly, the fundamental assumption that the convex body $\stah\subset\trc$
of density operators in $\hh$ --- the unit trace, positive trace class operators ---
be mapped into itself entails, more generally, positivity (rather than complete positivity);
thus, the larger class of semigroups of trace-preserving, positive linear maps will be considered as well.
In the following, we will actually deal with a finite-dimensional
--- say, with a $\dime$-dimensional, $\dime\ge 2$ --- complex Hilbert space $\hh$. It is clear
that, in this case, $\trc=\bH$ (the space of linear operators in $\hh$,
or $\dime\times\dime$ complex matrices). The identity in $\bH$ will be denoted by $\id$.
The unitary group $\unihh$ of $\hh$ will be regarded as endowed with
the strong topology (which, however, in the finite-dimensional case is equivalent to
the topology induced by any norm topology in $\bH$).

A quantum dynamical semigroup $\{\qds\colon\bH\rightarrow\bH\}_{t\in\errep}$
is then characterized by its --- in this case, of course, bounded --- generator $\gene$:
\begin{equation} \label{QDSG}
\gene = \lim_{t\downarrow 0}
t^{-1}\big(\qds - \ide) \fin , \ \ \ \qds = \exp(\gene\sei t) \fin .
\end{equation}
According to the celebrated Gorini-Kossakowski-Lindblad-Sudarshan classification
theorem~{\cite{Gorini,Lindblad}} (see also~\cite{Alicki} and the recent review~\cite{DC}),
$\gene$ has --- in the Schr\"odinger representation ---
the general form
\begin{equation} \label{forgene}
\gene\sei\opa = - \ima \big[\oph, \opa\big] +
\cpm\sei\opa - \frac{1}{2}
\left(\big(\cpm^\ast \id\big)\tre\opa+\opa\sette\big(\cpm^\ast \id\big)\right),
\end{equation}
where $\oph$ is a traceless selfadjoint operator in $\hh$,
$\cpm \colon \bH\rightarrow\bH$ a completely positive linear map~{\cite{Bengtsson,Holevo,Paulsen}} and
$\cpm^\ast$ its adjoint with respect to the Hilbert-Schmidt scalar product $\hspro$ in $\bH$.

\begin{remark} \label{pairi}
A linear map $\lino$ acting in $\bH$ is said to be
\emph{adjoint-preserving} if $\lino\sei\opa^\ast=\big(\lino\sei\opa\big)^\ast$,
for every $\opa\in\bH$. A positive (in particular, a completely positive) linear map and
the generator of a semigroup of positive  maps in $\bH$ are adjoint-preserving. Hence, for
defining the adjoint of such maps it is not relevant whether the pairing
$\big(\opa,\opb\big)\mapsto\tr\big(\opa\sei\opb\big)$ or
$\big(\opa,\opb\big)\mapsto\tr\big(\opa^\ast\opb\big)\ifed\big\langle\opa,\opb\big\ranglehs$
is used.
\end{remark}

\begin{remark} \label{contra}
One can easily show that, for a quantum dynamical map $\qdm$, denoting by $\|\cdot\noro$
the trace norm  --- in general, in the following $\|\cdot\norp$ will denote
the Schatten $\pap$-norm in $\bH$,
\begin{equation}
\big\|\opa\big\norp\defi\tr\big(\big|\opa\big|^\pap\big)^{1/\pap} ,
\ \ \ 1\le\pap<\infty \fin ,
\end{equation}
(thus, $\|\cdot\|_2\equiv\|\cdot\norhs$), and $\|\cdot\nori$ the standard operator norm ---
$\big\|\qdm\sei\opa\big\noro\le\big\|\opa\big\noro$ (note, however, that in the proof
positivity, rather than complete positivity, together with preservation of the trace,
is the relevant property). Therefore, regarding $\bH=\trc$ as endowed with the trace norm,
a quantum dynamical semigroup --- more generally, a semigroup of trace-preserving positive maps
--- is a \emph{contraction semigroup}.
\end{remark}

For every density operator $\hrho\in\stah$,
we can define the \emph{von~Neumann entropy} of $\hrho$, namely,
\begin{equation} \label{defvne}
\vne(\hrho) \defi -\tr(\hrho \ln \hrho) = - \sum_{k=1}^{\dime} \pk \ln \pk \fin ,
\end{equation}
where $\{\po,\ldots,\pd\}$ is the whole set of the eigenvalues of $\hrho$,
repeated according to degeneracy (of course, here $0\ln 0\equiv 0$).

\begin{remark}
Clearly, the definition of $\vne(\hrho)$ --- and of all other entropies
that will be considered below --- does not depend on the way the eigenvalues
$\{\pk\}$ of $\hrho$ are ordered. On the other hand,
arranging the sum on the rhs of~{(\ref{defvne})}
in such a way that the eigenvalues are labeled in decreasing
order --- $\po\ge\cdots\ge\pd\ge 0$ --- we obtain the \emph{ordered eigenvalue vector}
$\vep=\veprho=(\po,\ldots,\pd)$ of $\hrho$. This convention can be regarded as a
way for removing the ambiguity when labeling the eigenvalues of a density operator.
\end{remark}

\begin{remark}
Recall that ordered vectors are also used when defining the \emph{majorization}
relation~{\cite{Bhatia,Marshall}} between vectors. Given a vector
$x=(\xo,\ldots,\xn)$ in $\erren$, we denote by $\vex=(\vexo,\ldots,\vexn)$
the vector obtained rearranging the coordinates of $x$ in decreasing order:
$\vexo\ge\cdots\ge\vexn$. For $x,y\in\erren$, one says that $x$ is majorized by $y$
--- in symbols, $x\prec y$ --- if
\begin{equation} \label{majo}
\sum_{k=1}^j \vexk \le \sum_{k=1}^j \veyk \fin , \ \ 1\le j\le n \fin ,
\ \ \mbox{and} \ \ \sum_{k=1}^n \vexk = \sum_{k=1}^n \veyk \fin .
\end{equation}
Clearly, if $\xo\ge 0,\ldots,\xn\ge 0$ and $\sum_{k=1}^n \xk = 1$
(namely, if $\{\xk\}$ is a probability distribution), then
\begin{equation} \label{mmcase}
(1/n,\ldots,1/n)\prec (\xo,\ldots,\xn) \prec (1,0,\ldots,0) \fin .
\end{equation}
Moreover, $x\prec y$ and $y\prec x$ implies that $x \approx y$ ---
i.e., $x=\perm \sei y$, for some permutation matrix $\perm$ --- and $\vex=\vey$.
Majorization between eigenvalue vectors induces a natural
majorization relation between density operators~{\cite{Bengtsson,Wehrl}}, i.e.,
\begin{equation}
\ho \prec \hrho \ \defar \ \vepho\prec\veprho \fin .
\end{equation}
By relation~{(\ref{mmcase})}, the \emph{maximally mixed state} in $\stah$ --- namely,
the state $\mms\defi\dime^{-1}\tre\id$ --- is majorized by any other state, whereas
every state is majorized by any pure state. In general, for $\ho \prec \hrho$
and $\hrho\nprec\ho$, the state
$\ho$ will be `less pure' or `more mixed' than $\hrho$~{\cite{Wehrl}}.
\end{remark}

We will also consider the \emph{Tsallis entropy} $\tseq$, parametrized by $q$,
with $0<q\neq 1$; i.e.,
\begin{equation}
\tseq(\hrho) \defi \frac{1}{1-q} \otto \tr(\hrhoq) = \teq(\{\pk\}) \fin,
\end{equation}
where, for every finite probability distribution $\{\rk\}$,
\begin{equation}
\teq(\{\rk\}) \defi \frac{1}{1-q} \left(\sum_k \rkq -1\right).
\end{equation}
Clearly, as previously noted, $\teq(\{\rk\})$ does not depend on the ordering
of the elements of the probability distribution $\{\rk\}$; i.e., $\teq$ can also be
regarded as a symmetric function of the probability vector $(\ro,\rt,\ldots)$.
We now set
\begin{equation}
\teo(\{\rk\})\defi \lim_{q\rightarrow 1}\teq(\{\rk\})\fin ,
\end{equation}
and we observe that
\begin{equation}
\vne(\hrho) = \lim_{q\rightarrow 1} \tseq(\hrho)=\teo(\{\pk\}) \fin .
\end{equation}
Considering this relation, it is also natural to set
\begin{equation}
\tseo(\hrho) \equiv \vne(\hrho) \fin .
\end{equation}
Note that, for $q=2$, the Tsallis entropy is strictly related to
the \emph{purity} since $\tset(\hrho)=1-\tr(\hrhot)$; $\tset$ is
sometimes called the \emph{linear entropy}.
In the following, we will therefore consider the full family of entropies $\{\tseq\}_{q> 0}$.
We will use the well known fact that the associated functions $\{\teq\}_{q> 0}$
are \emph{concave} (see, e.g.,~{\cite{Naudts}}); namely,
for every pair $(\ro,\ldots,\rn)$, $(\so,\ldots,\sn)$ of probability vectors, and every convex combination
$(\epsilon\tre\ro+(1-\epsilon)\so,\ldots,\epsilon\tre\rn+(1-\epsilon)\sn)$, $\epsilon\in[0,1]$,
we have:
\begin{equation} \label{conca}
\teq(\{(\epsilon\tre\rk+(1-\epsilon)\sk)\}) \ge
\epsilon\tre\teq(\{\rk\})+(1-\epsilon)\tre\teq(\{\sk)\}) \fin .
\end{equation}

\begin{remark}
For the von~Neumann entropy, a stronger property holds.
Using Klein's inequality~{\cite{Bengtsson}},
one can show that, for every pair $\hrho,\ho\in\stah$,
$\vne(\epsilon\tre\hrho+(1-\epsilon)\tre\ho)\ge\epsilon\tre\vne(\hrho)+(1-\epsilon)\tre\vne(\ho)$.
Otherwise stated, $\vne$ itself, as a function on the convex body $\stah$, is concave.
\end{remark}

Another interesting family of entropies are the \emph{R\'enyi entropies} $\{\req\}_{q> 0}$.
For $0<q\neq 1$, we set
\begin{equation}
\req(\hrho) \defi \frac{1}{1-q} \otto \ln\tr(\hrhoq) =
\frac{1}{1-q} \ln\mcinque\left(\sum_{k=1}^{\dime} \pkq \right)\ifed\roq(\{\pk\}) \fin ,
\end{equation}
where $\{\pk\}$ is, as above, the whole set of the eigenvalues of $\hrho$, repeated according to
degeneracy. Once again we have:
\begin{equation}
\reqo (\hrho)  \equiv \vne(\hrho) = \lim_{q\rightarrow 1} \req(\hrho)=
\lim_{q\rightarrow 1} \roq(\{\pk\}) \ifed \roo(\{\pk\}) \fin .
\end{equation}
Moreover, for $q=2$, also the R\'enyi entropy is directly related to
the purity: $\reqt(\hrho)=-\ln\tr(\hrhot)$. The entropies $\{\req\}_{q> 0}$
are known --- see, e.g.,~{\cite{Bullen}} --- to be \emph{Schur concave}, namely,
for every pair $r=(\ro,\ldots,\rn)$, $s=(\so,\ldots,\sn)$ of probability vectors,
such that $s\prec r$ ($s$ is majorized by $r$),
$\roq(\{\sk\})\ge\roq(\{\rk\})$. Therefore, we have that
\begin{equation} \label{schuconca}
\ho \prec \hrho \ \defar \ \vepho\prec\veprho
\ \Longrightarrow \req (\ho) \ge \req (\hrho) \fin .
\end{equation}

\begin{remark} \label{symschu}
As in the case of $\teq$, also $\roq$ can be regarded as a symmetric function on
a probability simplex. This fact is actually related to its Schur concavity,
because every Schur concave function on a symmetric domain is symmetric
(w.r.t.\ permutations of its arguments)~{\cite{Marshall}}.
\end{remark}

\begin{remark}
A symmetric function $\varphi\colon\simpl\rightarrow\erre$, on a probability simplex
$\simpl$, is \emph{strictly} Schur concave if $x \prec y$, with $x \noappro y$ (i.e., $\vex\neq\vey$),
implies that $\varphi(x)>\varphi(y)$. Accordingly, we will say that a function $\fun\colon\stah\rightarrow\erre$
is strictly Schur concave if $\ho\prec\hrho$, with $\ho\noappro\hrho$ (i.e., $\vepho\neq\veprho$), implies that
$\fun(\hrho)>\fun(\ho)$. It turns out --- see sect.~{\ref{conclusions}} --- that
the R\'enyi entropies $\{\req\}_{q> 0}$ are actually strictly Schur concave.
\end{remark}

Finally, consider the function
\begin{equation}
\nep\colon\stah\ni\hrho\mapsto(1-\|\hrho\norp)\in\erre \fin ,
\ \ \ 1<\pap\le\infty \fin .
\end{equation}
Observe that, using the same kind of notation as above, we have:
\begin{equation}
\nep(\hrho)= 1-\left(\sum_k \pkp \right)^{\mdieci 1/\pap}
\mdodici\ifed\alp(\{\pk\})\fin , \ \ 1<\pap<\infty \fin ,
\end{equation}
\begin{equation} \label{ainf}
\neinf(\hrho)=1-\max(\{\pk\})\ifed\alinf(\{\pk\})
= \lim_{\pap\rightarrow\infty} \alp(\{\pk\})  \fin .
\end{equation}
The following inequalities hold:
\begin{equation}
n^{\frac{1-\pap}{\pap}}\le \left(\sum_k \pkp \right)^{\mdieci 1/\pap}
\mdodici \le 1 \fin , \ \ 1<\pap<\infty \fin , \ \ \ 1/n\le\max(\{\pk\})\le 1 \fin .
\end{equation}
It follows that
\begin{equation}
0\le\alp(\{\pk\})\le 1-n^{\frac{1-\pap}{\pap}} \fin , \ \ 1<\pap<\infty \fin ,
\ \ \ 0\le \alinf(\{\pk\}) \le 1-1/n \fin ,
\end{equation}
where in both cases the upper inequality is saturated by the probability distribution
$(1/n,\ldots,1/n)$ only, whereas the lower one is saturated by any
probability distribution $\{\pk\}$ such that the corresponding ordered probability vector
is of the form $\vep=(1,0,\ldots,0)$. Moreover, by the triangle inequality of
the norm $\|\cdot\norp$, $1<\pap\le\infty$, one easily finds
\begin{equation}
\nep(\epsilon\tre\hrho+(1-\epsilon)\tre\ho)
\ge\epsilon\tre\nep(\hrho)+(1-\epsilon)\tre\nep(\ho) \fin ,
\ \ \ \hrho,\ho\in\stah \fin , \ \epsilon\in[0,1] \fin ,
\end{equation}
and this relation implies for $\alp$ a concavity property analogous
to~{(\ref{conca})}. Therefore, the family of functions $\{\nep\}_{1<\pap\le\infty}$
enjoys some of the main typical properties of a quantum entropy.\footnote{Here, we will
not deal with the much debated question of what fundamental properties a quantum entropy
should satisfy.}

In addition to the previously recalled properties, in the following it will be of
central importance the fact that the maximally mixed state $\mms$ maximizes
--- strictly and globally --- all the entropies $\{\tseq\}_{q> 0}$, $\{\req\}_{q> 0}$
and $\{\nep\}_{1<\pap\le\infty}$.

\section{The completely positive dynamical semigroups that do not decrease a quantum entropy}
\label{main}

We start this section by assigning a precise meaning to the statement that
a quantum channel or a quantum dynamical semigroup does not decrease a quantum
entropy.

For the sake of conciseness, in this section and in the subsequent one
\emph{we will denote by $\entro$ any of the quantum entropies
$\{\tseq\}_{q> 0}$, $\{\req\}_{q> 0}$, $\{\nep\}_{1<\pap\le\infty}$.}

\begin{definition} \label{donode}
We say that a quantum dynamical map (a completely positive, trace-preserving linear map)
$\qdm\colon\bH\rightarrow\bH$ does not decrease the entropy $\entro$ if
\begin{equation}
\entro(\qdm\tre\hrho) \ge \entro(\hrho) \fin ,
\end{equation}
for all $\hrho\in\stah$.
\end{definition}

\begin{definition} \label{sedono}
We say that a quantum dynamical semigroup $\{\qds\colon\bH\rightarrow\bH\}_{t\in\errep}$ does not decrease
the entropy $\entro$ if
\begin{equation} \label{deca}
\entro(\qdsts\tre\hrho) \ge \entro(\qds\tre\hrho) \fin ,
\end{equation}
for all $\hrho\in\stah$ and all $t,s\ge 0$.
\end{definition}
A slightly simpler version of the latter definition is obtained taking into account
the following elementary fact.
\begin{proposition} \label{cohe}
A quantum dynamical semigroup $\{\qds\colon\bH\rightarrow\bH\}_{t\in\errep}$ does not decrease
the entropy $\entro$ if and only if
\begin{equation} \label{decb}
\entro(\qds\tre\hrho) \ge \entro(\hrho) \fin ,
\end{equation}
for all $\hrho\in\stah$ and all $t\ge 0$. Otherwise stated, a quantum dynamical semigroup
does not decrease the entropy $\entro$ if and only if each of its members,
as a quantum dynamical map, enjoys this property.
\end{proposition}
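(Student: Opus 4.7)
The proposition asks for a small but useful equivalence: the apparently stronger, time-monotonicity formulation of Definition~\ref{sedono} collapses to the pointwise comparison with the initial state. My plan is to establish the two implications directly from the semigroup axioms in sect.~\ref{basic}, together with the obvious fact that each $\qds$ maps the state space $\stah$ into itself (since it is a quantum dynamical map).

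For the ``if'' direction I would assume that $\entro(\qds \hrho) \ge \entro(\hrho)$ holds for every $\hrho\in\stah$ and every $t\ge 0$. Given arbitrary $t,s\ge 0$ and $\hrho\in\stah$, I set $\ho \defi \qds \hrho \in \stah$ and apply the hypothesis at parameter $s$ to the state $\ho$, obtaining $\entro(\qdss \ho) \ge \entro(\ho)$. Invoking the semigroup law $\qdss \qds = \qdsts$ (property~1 in the definition of a semigroup of operators) yields $\entro(\qdsts \hrho) \ge \entro(\qds \hrho)$, which is precisely~(\ref{deca}).

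For the ``only if'' direction I would simply specialize~(\ref{deca}) by choosing $t=0$, using $\qdm_0 = \ide$ (property~2): for every $s\ge 0$ and $\hrho\in\stah$,
\begin{equation}
\entro(\qdss \hrho) = \entro(\qdm_{s+0}\hrho) \ge \entro(\qdm_0 \hrho) = \entro(\hrho),
\end{equation}
which, after renaming $s\to t$, is~(\ref{decb}).

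There is no serious obstacle here; the only thing to be careful about is that, in order to apply the hypothesis to $\qds \hrho$ in the first implication, one needs $\qds \hrho$ to be a legitimate state, which is guaranteed because $\qds$ is a quantum dynamical map and hence preserves $\stah$. The final sentence of the proposition — that non-decrease of $\entro$ along the semigroup is equivalent to each individual $\qds$ being an entropy non-decreasing quantum channel in the sense of Definition~\ref{donode} — is then just a restatement of~(\ref{decb}).
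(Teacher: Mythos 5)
Your proof is correct and follows essentially the same route as the paper's: the forward direction by setting $t=0$ in~(\ref{deca}), and the converse by applying the pointwise hypothesis to the state $\qds\tre\hrho$ together with the semigroup law $\qdss\qds=\qdsts$. The extra remark that $\qds\tre\hrho\in\stah$ is a harmless (and correct) elaboration of a step the paper leaves implicit.
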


\begin{proof}
If relation~{(\ref{deca})} holds, then setting $t=0$ therein one gets~{(\ref{decb})}.
Conversely, if~{(\ref{decb})} holds, then
\begin{equation}
\entro(\qdsts\tre\hrho) = \entro(\qdss(\qds\tre\hrho))\ge \entro(\qds\tre\hrho) \fin ,
\end{equation}
for all $\hrho\in\stah$ and all $t,s\ge 0$.
\end{proof}

The two following results will be central in the proof of the main theorem.
Recall that a real square matrix is called \emph{bistochastic} (or doubly stochastic)
if it has non-negative entries, and each row and each column sums up to
unity.

\begin{lemma} \label{teclem}
Let $\qdm\colon\bH\rightarrow\bH$ be a quantum dynamical map. Then, the
following properties are equivalent:
\begin{description}

\item[\tt (P1)]
$\qdm$ is unital, namely,
\begin{equation}
\qdm\sei \id = \id  ;
\end{equation}

\item[\tt (P2)]
for every $\hrho\in\stah$, there is a bistochastic matrix $\bm$ such that
$\vep(\qdm\sei\hrho)=\bm\sei\vep(\hrho)$;

\item[\tt (P3)]
for every $\hrho\in\stah$, $\qdm\sei\hrho\prec\hrho$.

\end{description}
\end{lemma}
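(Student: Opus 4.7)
The plan is to prove the chain $\text{(P1)} \Rightarrow \text{(P2)} \Rightarrow \text{(P3)} \Rightarrow \text{(P1)}$. The equivalence (P2) $\Leftrightarrow$ (P3) is immediate from the Hardy--Littlewood--P\'olya characterization of majorization: $x \prec y$ holds (with equal coordinate sums) if and only if $x = \bm \sei y$ for a bistochastic $\bm$. Since $\qdm$ is trace preserving, the ordered eigenvalue vectors $\vep(\hrho)$ and $\vep(\qdm\sei\hrho)$ have common sum $1$, so this classical result transfers directly. The real content therefore lies in bridging (P1) with (P3).

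For (P1) $\Rightarrow$ (P2), I would fix spectral decompositions $\hrho = \sum_j \pj \sei \orthj$ and $\qdm\sei\hrho = \sum_k \rk \sei \hat{Q}_k$ into rank-one mutually orthogonal projectors, with eigenvalues arranged in decreasing order, and define
\begin{equation*}
\bm_{kj} \defi \tr\bigl(\hat{Q}_k \sei \qdm\sei\orthj\bigr) .
\end{equation*}
Positivity of $\qdm$ gives $\bm_{kj} \ge 0$; trace preservation gives $\sum_k \bm_{kj} = \tr(\qdm\sei\orthj) = 1$; unitality combined with $\sum_j \orthj = \id$ gives $\sum_j \bm_{kj} = \tr(\hat{Q}_k\sei\qdm\sei\id) = 1$. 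Thus $\bm$ is bistochastic, and a one-line calculation using $\hrho = \sum_j \pj\sei\orthj$ yields $\rk = \sum_j \bm_{kj} \sei \pj$, i.e.\ $\vep(\qdm\sei\hrho) = \bm\sei\vep(\hrho)$.

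The implication (P3) $\Rightarrow$ (P1) is where the delicate point resides: I would apply the hypothesis at the maximally mixed state $\mms = \dime^{-1} \id$. By (P3) we have $\qdm\sei\mms \prec \mms$, while the fact that $\mms$ is the minimum of the majorization preorder on $\stah$ yields $\mms \prec \qdm\sei\mms$ automatically. The two relations together force $\vep(\qdm\sei\mms) = \vep(\mms) = (1/\dime,\ldots,1/\dime)$, so every eigenvalue of $\qdm\sei\mms$ equals $1/\dime$, and hence $\qdm\sei\mms = \dime^{-1}\id = \mms$. Multiplying through by $\dime$ produces $\qdm\sei\id = \id$, which is (P1).

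The main subtlety I anticipate is a bookkeeping one: in (P1) $\Rightarrow$ (P2) the eigenprojectors $\orthj$ and $\hat{Q}_k$ are not unique when the spectra of $\hrho$ or $\qdm\sei\hrho$ are degenerate, but any choice of orthonormal eigenbasis works to build $\bm$, and the final statement is phrased through the canonically defined ordered eigenvalue vectors, so the ambiguity is inessential. Everything else is a direct computation using positivity, trace preservation and unitality.
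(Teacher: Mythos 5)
Your proof is correct and follows essentially the same route as the paper: the same matrix $\bm_{kj}=\tr\bigl(\hat{Q}_k\sei\qdm\sei\orthj\bigr)$ built from the two spectral decompositions for (P1) $\Rightarrow$ (P2), majorization under bistochastic matrices for (P2) $\Rightarrow$ (P3), and evaluation at the maximally mixed state for (P3) $\Rightarrow$ (P1). Your direct verification of the row and column sums of $\bm$ is in fact slightly cleaner than the paper's appeal to ``the arbitrariness of the probability distribution,'' but the construction and the logical chain are identical.
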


\begin{proof}
{\tt (P1)} implies {\tt (P2)}. Indeed, consider the
eigenvalue decompositions
\begin{equation} \label{eide}
\hrho=\sum_{k=1}^{\dime} \pk \sei \prok \fin , \ \ \
\qdm\sei\hrho=\sum_{j=1}^{\dime} \tpj \sei \vproj \fin ,
\end{equation}
where $\{\prok\}$, $\{\vproj\}$ are rank-one projections.
Then, the probability distribution $\{\tpj\}$ is given by
\begin{equation}
\tpj=\tr\big(\vproj(\qdm\sei\hrho)\big) =
\sum_{k=1}^{\dime} \tr\big(\vproj(\qdm\sei\prok)\big) \tre \pk
\ifed \sum_{k=1}^{\dime}\bm_{jk}\sei\pk \fin .
\end{equation}
By the arbitrariness of the probability distribution $\{\pk\}$
(i.e., by varying $\hrho$) and the fact that
$\qdm\sei\mms=\mms$ ($\qdm$ is unital), we see that the matrix
$\bm$ (with positive entries) must be bistochastic.
By assuming now that the probability distributions
$\{\pk\}$, $\{\tpk\}$ in~{(\ref{eide})}
are arranged in decreasing order,
we get $\vep(\qdm\sei\hrho)=\bm\sei\vep(\hrho)$.

Moreover, {\tt (P2)} implies {\tt (P3)}. In fact, by a well known result
on bistochastic matrices (see~{\cite{Bhatia}}, Theorem~{II.1.9}), we have that
\begin{equation}
\bm \ \mbox{bistochastic} \ \Longrightarrow \ \vep(\qdm\sei\hrho)=\bm\sei\vep(\hrho)\prec\veprho
\ \ (\Longleftrightarrow \qdm\sei\hrho \prec \hrho) \fin .
\end{equation}

Finally, {\tt (P3)} implies {\tt (P1)} because $\qdm\sei\mms\prec\mms$ entails
that $\qdm\sei\mms=\mms$. The proof is complete.
\end{proof}

\begin{lemma} \label{mainlem}
A quantum dynamical map $\qdm\colon\bH\rightarrow\bH$ does not decrease
the entropy $\entro$ if and only if it is unital.
\end{lemma}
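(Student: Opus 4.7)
The plan is to prove the equivalence in two separate steps, combining Lemma~\ref{teclem} for the sufficient direction with the strict and global maximality of $\mms$ (recalled at the end of sect.~\ref{basic}) for the necessary direction.

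For sufficiency, assume $\qdm$ is unital. Property {\tt (P3)} of Lemma~\ref{teclem} then gives $\qdm\sei\hrho \prec \hrho$ for every $\hrho \in \stah$, so it suffices to show that each of the entropies in the three families $\{\tseq\}_{q>0}$, $\{\req\}_{q>0}$, $\{\nep\}_{1<\pap\le\infty}$ is Schur concave on $\stah$, equivalently that the corresponding function on the eigenvalue simplex is Schur concave in the usual sense. For the R\'enyi family this is precisely relation~(\ref{schuconca}). For the Tsallis family and for $\{\nep\}_{1<\pap\le\infty}$, I would instead invoke the fact --- already noted in sect.~\ref{basic} --- that the associated functions $\teq$ and $\alp$ on the probability simplex are symmetric (they depend only on the multiset of eigenvalues) and concave. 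The classical consequence of the Hardy--Littlewood--P\'olya theorem then applies: since $x \prec y$ is equivalent to $x$ lying in the convex hull of the permutations of $y$, any symmetric concave function on the simplex is automatically Schur concave. This gives $\entro(\qdm\sei\hrho) \ge \entro(\hrho)$ in all three cases.

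For necessity, suppose that $\qdm$ does not decrease $\entro$. Specializing the defining inequality to $\hrho = \mms$ yields $\entro(\qdm\sei\mms) \ge \entro(\mms)$. Because $\mms$ is the \emph{unique} state in $\stah$ at which every such $\entro$ attains its maximum, this forces $\entro(\qdm\sei\mms)=\entro(\mms)$ and hence $\qdm\sei\mms = \mms$. Since $\mms = \dime^{-1}\tre\id$, linearity immediately gives $\qdm\sei\id = \id$, so $\qdm$ is unital.

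No serious obstacle is anticipated: essentially everything comes either from Lemma~\ref{teclem} or from the strict maximality of $\mms$, the only mildly non-trivial step being the passage from symmetry-plus-concavity to Schur concavity for the Tsallis and $\nep$ families, which is a standard fact. It is worth emphasizing that the argument runs uniformly across the three families, which is exactly what justifies the unified formulation of the lemma.
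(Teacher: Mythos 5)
Your proposal is correct and follows essentially the same path as the paper: Lemma~\ref{teclem} plus (Schur) concavity for sufficiency, and the strict global maximality of $\mms$ for necessity. The only difference is organizational --- the paper handles $\tseq$ and $\nep$ by an explicit Birkhoff--von~Neumann decomposition combined with concavity and symmetry, reserving Schur concavity for $\req$, whereas you run all three families uniformly through the ``symmetric $+$ concave $\Rightarrow$ Schur concave'' fact; the paper itself notes in Remark~\ref{symschu-bis} that this unified route is available and that its own case distinction is not necessary.
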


\begin{proof}
Assume that quantum dynamical map $\qdm$ is unital, and take any $\hrho\in\stah$.
Let $\veprho=(\po,\ldots,\pd)$ be the ordered eigenvalue vector of $\hrho$,
and let $\vepho=(\tpo,\ldots,\tpd)$ be the analogous vector for $\ho\equiv \qdm\tre\hrho$.
Then, since the quantum dynamical map $\qdm$ is unital,
by Lemma~{\ref{teclem}} there is a bistochastic
$\dime\times\dime$ matrix $\bm$ such that $\tpj=\sum_{k}\bm_{jk}\sei\pk$.

Suppose first that $\entro=\tseq$, for some $q > 0$.
By Birkhoff-von~Neumann theorem (see~{\cite{Bhatia}}, Theorem~{II.2.3}),
the matrix $\bm$ can be expressed as a convex combination
of permutation matrices
\begin{equation}
\bm = \sum_l \cml \sei \perml \fin , \ \ \cml >0 \fin , \ \ \sum_l \cml =1 \fin ;
\end{equation}
hence:
\begin{equation}
\tpj = \sum_l \cml \sei \lpj \fin , \ \ \ \lpj\defi\sum_k \permljk\sei\pk \fin .
\end{equation}
Then, by the concavity property~{(\ref{conca})} of $\teq$ and by the fact that $\{\pj\}=\big\{\lpj\big\}$
(regarded as un-ordered sets), we find
\begin{equation}
\tseq(\ho) = \teq(\{\tpj\}) \ge \sum_l \cml \sei \teq\big(\big\{\lpj\big\}\big)
= \sum_l \cml \sei \teq(\{\pj\}) = \teq(\{\pj\})=\tseq(\hrho) \fin .
\end{equation}

The same proof can be repeated verbatim for $\entro=\nep$ (replacing $\teq$
with $\alp$), $1<\pap\le\infty$.

Suppose now that $\entro=\req$, for some $q> 0$. Then,
by Lemma~{\ref{teclem}} we have:
\begin{equation}
\qdm \ \mbox{unital} \ \Longrightarrow \ \ho \prec \hrho \fin .
\end{equation}
Therefore, by the Schur concavity of the R\'enyi entropies, we conclude
that
\begin{equation}
\qdm \ \mbox{unital} \ \Longrightarrow
\ho \prec \hrho \ \Longrightarrow \req (\ho) \ge \req (\hrho) \fin .
\end{equation}

Conversely, suppose that $\qdm$ does not decrease $\entro$. Since the strict global maximum for this
quantity is attained at $\mms=\dime^{-1}\tre \id$ (the maximally mixed state), it follows
that $\qdm\tre \id=\id$.
\end{proof}

\begin{remark} \label{onlypos}
Observe that $\qdm$ can be replaced --- in Definition~{\ref{donode}}, and in both Lemma~{\ref{teclem}}
and Lemma~{\ref{mainlem}} --- with any trace-preserving, positive linear map (i.e., positivity rather than
complete positivity is relevant therein).
\end{remark}

\begin{remark}
The function $\roq$ associated with the R\'enyi entropy is concave
for $0<q\le 1$ (see, e.g.,~{\cite{Principe}}, chapter~2), but not, in general,
for $q>1$; more precisely, it looses concavity for $q>\quz$, where the $\dime$-dependent
number $\quz$ is such that $1<\quz\le 1 + \ln(4)/\ln(\dime-1)$.
Hence, concavity could not be invoked, for $q>1$, in the part of the proof
of Lemma~{\ref{mainlem}} involving this kind of entropy. As we have seen, one can exploit
the \emph{Schur} concavity of $\req$, instead.
\end{remark}

\begin{remark} \label{symschu-bis}
It is worth observing, in connection with the previous remark, that
a concave function, defined on a convex symmetric domain in $\erren$
(e.g., on the probability simplex $\simplmo$),
is Schur concave if and only if it is symmetric; see~{\cite{Marshall}}, chapter~3,
C.2 (it is easy to check that the proof of this result holds for every domain of the
mentioned type). Hence, all the quantum entropies $\{\tseq\}_{q> 0}$, $\{\req\}_{q> 0}$,
$\{\nep\}_{1<\pap\le\infty}$ are actually Schur concave, in the sense specified in
sect.~{\ref{basic}}. E.g., recall that the quantities $\{\nep\}_{1<\pap\le\infty}$ are directly related
to the Schatten norms, which are, like every unitarily invariant norm, \emph{symmetric gauge functions}
of the singular values of their (matrix) argument; then, regarded as (symmetric) functions of the singular values,
they are Schur convex~{\cite{Marshall}}. Nevertheless, we think that distinguishing the
two cases --- the concave case and the Schur concave one --- in the proof of
Lemma~{\ref{mainlem}} is, although not necessary, interesting and instructive.
\end{remark}

We will also need a further observation.

\begin{lemma} \label{killem}
A quantum dynamical semigroup $\{\qds\}_{t\in\errep}$,
\begin{equation}
\qds = \exp(\gene\sei t) \fin ,
\end{equation}
is unital --- $\qds\tre \id=\id$, $t\ge 0$ --- if and only if
the generator $\gene$ kills the identity: $\gene\sei \id=0$.
\end{lemma}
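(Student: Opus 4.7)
The plan is to prove the two implications by directly exploiting the differential relation between a semigroup and its generator, which in this finite-dimensional setting is the bounded operator identity $\qds = \exp(\gene t)$ from~(\ref{QDSG}).

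First I would handle the necessity. Assume $\qds\tre\id = \id$ for every $t \ge 0$. Since $\qds\tre\id - \id = 0$ identically in $t$, the defining limit
\begin{equation}
\gene\sei\id = \lim_{t\downarrow 0} t^{-1}\big(\qds\tre\id - \id\big)
\end{equation}
vanishes trivially. In this direction nothing more than the definition of the generator is required, so the argument is immediate.

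For the sufficiency, suppose $\gene\sei\id = 0$. Here I would exploit the boundedness of $\gene$: the exponential series converges in operator norm, so
\begin{equation}
\qds\tre\id = \exp(\gene\sei t)\sei\id = \id + \sum_{n=1}^\infty \frac{t^n}{n!}\sei \gene^{n-1}\big(\gene\sei\id\big) = \id
\end{equation}
for every $t\ge 0$. Equivalently, one can invoke uniqueness of solutions of the linear ODE $\dert\sei\opa(t) = \gene\sei\opa(t)$ with initial condition $\opa(0) = \id$: the constant curve $\opa(t)\equiv\id$ solves it whenever $\gene\sei\id = 0$, so it must coincide with $t\mapsto\qds\tre\id$.

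Neither direction hides any real obstacle, since in finite dimensions the generator is bounded and the analytic-versus-strong distinctions that complicate infinite-dimensional semigroup theory disappear. The only point worth flagging is that the result is manifestly symmetric between the Schr\"odinger and Heisenberg pictures (by Remark~\ref{pairi}, $\gene$ is adjoint-preserving), and it is this lemma that will let us translate the unitality characterization of Lemma~\ref{mainlem} into a condition on the GKLS generator~(\ref{forgene}).
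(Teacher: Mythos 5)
Your proof is correct and is precisely the routine argument the authors had in mind when they wrote ``Trivial'': necessity from the defining limit of the generator, sufficiency from the norm-convergent exponential series (or, equivalently, uniqueness for the linear ODE), both unproblematic since $\gene$ is bounded in finite dimensions. No gap; this matches the paper's (unwritten) proof.
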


\begin{proof}
Trivial.
\end{proof}

We are now ready to state and prove the main result of this section.

\begin{theorem} \label{mainth}
Let $\hh$ be a finite-dimensional complex Hilbert space, and let
$\{\qds\colon\bH\rightarrow\bH\}_{t\in\errep}$ be a
quantum dynamical semigroup with generator $\gene$.
Then, the following properties are equivalent:
\begin{enumerate}

\item
$\{\qds\}_{t\in\errep}$ is unital: $\qds\tre \id=\id$, $t\ge 0$;

\item
for every $t\ge 0$ and every $\hrho\in\stah$, $\qds\sei\hrho\prec\hrho$;

\item
$\{\qds\}_{t\in\errep}$ does not decrease the entropy $\entro$;

\item
the infinitesimal generator $\gene$ is of the form
\begin{equation} \label{genentro}
\gene\sei\opa = - \ima \big[\oph, \opa\big] +
\cpm\sei\opa - \frac{1}{2}
\left(\big(\cpm\sei \id)\tre\opa+\opa\sette\big(\cpm\sei \id\big)\right) ,
\end{equation}
for some completely positive map $\cpm$ such that
\begin{equation} \label{genentro-bis}
\cpm \sei \id = \cpm^\ast \id \fin .
\end{equation}

\end{enumerate}
\end{theorem}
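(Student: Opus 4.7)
The plan is to establish the four-way equivalence by proving the three equivalences (1) $\Leftrightarrow$ (2), (1) $\Leftrightarrow$ (3), (1) $\Leftrightarrow$ (4), thereby pivoting everything through unitality. The first two will follow from the lemmas already proved, applied pointwise in $t \ge 0$, while the last is a direct consequence of Lemma~\ref{killem} combined with a short calculation on the GKLS form.

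For (1) $\Leftrightarrow$ (2), I would apply Lemma~\ref{teclem} with $\qdm = \qds$ for each fixed $t$: its conditions {\tt (P1)} and {\tt (P3)} translate, respectively, into unitality of $\qds$ and the majorization $\qds\sei\hrho \prec \hrho$. Since the equivalence is pointwise in $t$, it upgrades to the full semigroup statement. For (1) $\Leftrightarrow$ (3), the forward implication combines Proposition~\ref{cohe}, which reduces (3) to the assertion ``each $\qds$ does not decrease $\entro$'', with Lemma~\ref{mainlem} applied to every $\qds$; the reverse implication is again Lemma~\ref{mainlem}, since a semigroup that does not decrease $\entro$ does so at every fixed time $t$.

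For (1) $\Leftrightarrow$ (4), I would first invoke Lemma~\ref{killem} to rewrite (1) as $\gene\sei\id = 0$, and then evaluate the canonical GKLS form (\ref{forgene}) on the identity. The Hamiltonian commutator vanishes and the anticommutator term collapses to $\cpm^\ast\id$, yielding
\begin{equation}
\gene\sei\id \;=\; \cpm\sei\id - \cpm^\ast\id.
\end{equation}
Hence $\gene\sei\id = 0$ is equivalent to (\ref{genentro-bis}); and whenever the latter holds, one may replace $\cpm^\ast\id$ by $\cpm\sei\id$ in (\ref{forgene}) to obtain (\ref{genentro}). Conversely, the expression (\ref{genentro}) together with (\ref{genentro-bis}) is literally (\ref{forgene}) and therefore kills $\id$ by direct substitution.

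No step seems to pose a real obstacle. The mildly delicate point is the bookkeeping at the end: one must confirm that rewriting (\ref{forgene}) as (\ref{genentro}) under the constraint (\ref{genentro-bis}) genuinely preserves the GKLS structure, so that the resulting operator is indeed the generator of a quantum dynamical semigroup. Since (\ref{genentro-bis}) makes the substitution an honest identity between operators, this reduces to the observation that the two expressions literally coincide under that constraint, and no further conditions on $\oph$ or $\cpm$ need to be imposed.
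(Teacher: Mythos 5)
Your proposal is correct and follows essentially the same route as the paper: Lemma~\ref{teclem} for (1)$\Leftrightarrow$(2), Proposition~\ref{cohe} together with Lemma~\ref{mainlem} for (1)$\Leftrightarrow$(3), and Lemma~\ref{killem} plus the observation that evaluating the GKLS form on the identity yields $\gene\sei\id=\cpm\sei\id-\cpm^\ast\id$ for (1)$\Leftrightarrow$(4). The only difference is that you write out this last computation explicitly, which the paper leaves implicit.
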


\begin{proof}
By Lemma~{\ref{teclem}} the first two properties are equivalent.
By Proposition~{\ref{cohe}} and Lemma~{\ref{mainlem}}, the quantum dynamical
semigroup $\{\qds\}_{t\in\errep}$,
\begin{equation}
\qds = \exp(\gene\sei t) \fin ,
\end{equation}
does not decrease the entropy $\entro$ if and only if it is unital.
On the other hand, by Lemma~{\ref{killem}} $\{\qds\}_{t\in\errep}$ is unital if and only if
the generator $\gene$ kills the identity. Moreover, the condition $\gene\sei \id=0$
is equivalent to the fact that $\cpm \sei \id = \cpm^\ast \id$, where
$\cpm \colon \bH\rightarrow\bH$ is the completely positive map appearing in the
canonical form~{(\ref{forgene})} of the generator. The proof is complete.
\end{proof}

\begin{remark} \label{schano}
Suppose that a quantum dynamical semigroup $\{\qds\colon\bH\rightarrow\bH\}_{t\in\errep}$
(strictly) decreases some of the previously considered entropies (for some $t>0$ and some state).
Then, by Theorem~{\ref{mainth}} the same property holds for all other entropies, in particular
for $\nep$. It follows that, for every $1<\pap\le\infty$,
there exist some $t>0$ and some state $\hrho\in\stah$
such that
\begin{equation} \label{viol}
\|\qds\sei\hrho\norp > \|\hrho\norp \fin  .
\end{equation}
(Clearly, the subset of $\stah$ satisfying~{(\ref{viol})}
will always contain $\mms$.) Therefore, for some $t>0$,
\begin{equation}
\|\qds\norpp\defi\sup_{0\neq\opa\in\bH}
\big(\big\|\qds\sei\opa\big\norp / \big\|\opa\big\norp\big)> 1 \fin .
\end{equation}
Namely, $\{\qds\}_{t\in\errep}$ is not contractive w.r.t.\
the norm $\|\cdot\norpp$, for every $1<\pap\le\infty$
(whereas every quantum dynamical semigroup is contractive
w.r.t.\ the norm $\|\cdot\norpo$, see Remark~{\ref{contra}}).
\end{remark}

\begin{remark} \label{onlypos-bis}
Observe that Definition~{\ref{sedono}}, Proposition~{\ref{cohe}} and Lemma~{\ref{killem}}
extend immediately to semigroups of trace-preserving positive (not necessarily completely positive) maps.
As the reader may check, by this observation and by Remark~{\ref{onlypos}} the statement of
Theorem~{\ref{mainth}} remains valid --- with the only exception of the specific
form~{(\ref{genentro})}--{(\ref{genentro-bis})} of the generator $\gene$ --- if the quantum dynamical
semigroup $\{\qds\}_{t\in\errep}$ is replaced with a semigroup of operators whose members are assumed
to be trace-preserving, positive linear maps. Therefore, also the previous Remark~{\ref{schano}}
extends to this larger class of dynamical semigroups.
\end{remark}

\begin{remark} \label{rekss}
The completely positive map $\cpm$ in~{(\ref{genentro})} can be expressed in the
\emph{Kraus-Stinespring-Sudarshan form}
\begin{equation} \label{kssfo}
\cpm\sei\opa = \sum_{k=1}^{m}
\opkk\opa\sei\opkka \fin ,\ \ \ \opa\in\bH \fin ,
\end{equation}
and the condition $\cpm \sei \id = \cpm^\ast \id$ is expressed in terms of
the set of operators $\{\opko,\ldots,\opkm\}$ by the requirement that these
be \emph{jointly normal}, i.e.,
\begin{equation} \label{joinor}
\sum_{k=1}^{m} \opkk\tre\opkka = \sum_{k=1}^{m} \opkka\tre\opkk \fin .
\end{equation}
However, decomposition~{(\ref{kssfo})} is not unique, and one can further require that they be
\emph{traceless} (by a suitable redefinition of the Hamiltonian $\oph$) --- namely, that they live in
the orthogonal complement (w.r.t.\ the Hilbert-Schmidt scalar product) of the one-dimensional
subspace spanned by the identity --- and \emph{mutually orthogonal}:
\begin{equation}
\big\langle \opkj,\opkk\big\ranglehs :=
\tr\big(\opkja\cinque\opkk\big)= \kappa_j \tre\delta_{jk} \fin ,\
\ \ j,k =1,\dots,m \ \ (m\le\dime^2 -1) \fin , \ \ \ \kappa_j >0 \fin .
\end{equation}
Implementing these requirements one obtains the so-called  \emph{diagonal form} of
the infinitesimal generator $\gene$~{\cite{Breuer,Aniello1}}.
\end{remark}

We will now derive a few remarkable consequences of the main theorem.

\begin{corollary} \label{coradj}
A quantum dynamical semigroup $\{\qds\colon\bH\rightarrow\bH\}_{t\in\errep}$
does not decrease the quantum entropy $\entro$ if and only if the adjoint
semigroup is a quantum dynamical semigroup too.
\end{corollary}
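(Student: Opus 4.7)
The plan is to reduce the corollary to Theorem~\ref{mainth} by recognizing that the condition ``the adjoint semigroup $\{\qds^\ast\}_{t\in\errep}$ is itself a quantum dynamical semigroup'' is, modulo automatic properties, precisely the condition that each $\qds$ be unital. First I would check the automatic structural properties of $\{\qds^\ast\}_{t\in\errep}$ that hold regardless of any hypothesis: the semigroup law $\qds^\ast\qdss^\ast=(\qdss\qds)^\ast=\qdsts^{\!\ast}$ follows directly from the semigroup law for $\{\qds\}_{t\in\errep}$ by dualization; in finite dimensions, continuity of the adjoint operation (for any matrix norm) together with the norm continuity of $t\mapsto\qds$ yields strong continuity of $t\mapsto\qds^\ast$ at $t=0$; and complete positivity is preserved by taking the adjoint with respect to the Hilbert-Schmidt scalar product (a standard fact, consistent with Remark~\ref{pairi}), so each $\qds^\ast$ is automatically completely positive.

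Consequently, the only nontrivial requirement left for $\{\qds^\ast\}_{t\in\errep}$ to qualify as a quantum dynamical semigroup in the sense used throughout the paper is that each $\qds^\ast$ be trace-preserving. Here I would invoke the duality identity
\begin{equation}
\tr\big(\qds^\ast\opa\big)=\big\langle\id,\qds^\ast\opa\big\ranglehs
=\big\langle\qds\sei\id,\opa\big\ranglehs=\tr\big((\qds\sei\id)\opa\big),
\qquad \opa\in\bH,
\end{equation}
which shows that $\qds^\ast$ is trace-preserving (for every $\opa\in\bH$) if and only if $\qds\sei\id=\id$; that is, if and only if $\qds$ is unital. Thus $\{\qds^\ast\}_{t\in\errep}$ is a quantum dynamical semigroup if and only if $\{\qds\}_{t\in\errep}$ is unital.

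To conclude, I would apply the equivalence of items~1 and~3 in Theorem~\ref{mainth}: $\{\qds\}_{t\in\errep}$ is unital if and only if it does not decrease the quantum entropy $\entro$. Chaining this with the equivalence established in the previous paragraph yields the desired biconditional. The only mild subtlety — and hence the closest thing to an ``obstacle'' — is pinning down the correct convention for the adjoint and verifying that trace preservation of $\qds^\ast$ really corresponds to unitality of $\qds$ (rather than the other way around); once the Hilbert-Schmidt pairing of Remark~\ref{pairi} is fixed, this is a one-line computation and the rest of the argument is a direct appeal to Theorem~\ref{mainth}.
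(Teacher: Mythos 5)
Your proof is correct and follows precisely the route the paper intends: the corollary is stated as an immediate consequence of Theorem~\ref{mainth} (the paper supplies no explicit proof), and your argument fills in exactly the implicit step, namely that the adjoint family is automatically a semigroup of completely positive maps and is trace-preserving if and only if each $\qds$ is unital. The duality computation $\tr(\qds^\ast\opa)=\tr((\qds\sei\id)\tre\opa)$ and the appeal to the equivalence of unitality with the non-decrease of $\entro$ are both sound.
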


\begin{corollary} \label{corpur}
A quantum dynamical semigroup $\{\qds\colon\bH\rightarrow\bH\}_{t\in\errep}$
does not decrease the quantum entropy $\entro$ if and only if it does not
increase the purity.
\end{corollary}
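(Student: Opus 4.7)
The plan is to reduce the statement to a direct application of Theorem~\ref{mainth}, exploiting the identity $\tset(\hrho)=1-\tr(\hrhot)$ which connects the Tsallis entropy at $q=2$ (the linear entropy) to the purity $\tr(\hrhot)$.

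First I would record the elementary equivalence: the semigroup $\{\qds\}_{t\in\errep}$ does not increase the purity --- i.e., $\tr\big((\qds\sei\hrho)^2\big)\le\tr(\hrhot)$ for every $\hrho\in\stah$ and every $t\ge 0$ --- if and only if it does not decrease the entropy $\tset$. Indeed, one has
\begin{equation}
\tset(\qds\sei\hrho)-\tset(\hrho)=\tr(\hrhot)-\tr\big((\qds\sei\hrho)^2\big),
\end{equation}
so the two inequalities are manifestly equivalent (for each fixed $t$ and $\hrho$).

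For the forward direction of the corollary, suppose that $\{\qds\}_{t\in\errep}$ does not decrease $\entro$. By Theorem~\ref{mainth} this is equivalent to unitality of the semigroup, which in turn, by a second application of the same theorem with the specific choice $\entro=\tset$ (the Tsallis entropy at $q=2$ being one of the admissible entropies), implies that $\tset$ is not decreased along the dynamics; the preliminary observation then gives that the purity does not increase. Conversely, if the purity does not increase, then $\tset$ is not decreased; by Theorem~\ref{mainth} applied with $\entro=\tset$, the semigroup is unital, and a further invocation of the theorem in the opposite direction yields that $\entro$ is not decreased for every admissible choice.

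The proof is thus essentially a bookkeeping of the equivalences already secured by Theorem~\ref{mainth}; there is no substantive obstacle. The only conceptual point is the identification of \emph{purity not increasing} with \emph{$\tset$ not decreasing}, which is immediate from the definition of the linear entropy and the fact that $\tset$ belongs to the family $\{\tseq\}_{q>0}$ covered by the main theorem.
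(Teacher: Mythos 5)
Your proof is correct and follows exactly the paper's route: the authors' entire proof is the remark ``Recall that $\tset(\hrho)=1-\tr(\hrhot)$,'' leaving implicit the bookkeeping through Theorem~\ref{mainth} (not decreasing $\entro$ $\Leftrightarrow$ unitality $\Leftrightarrow$ not decreasing $\tset$ $\Leftrightarrow$ not increasing the purity) that you spell out explicitly.
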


\begin{proof}
Recall that $\tset(\hrho)=1-\tr(\hrhot)$.
\end{proof}

\begin{corollary} \label{corgen}
A quantum dynamical semigroup $\{\qds\colon\bH\rightarrow\bH\}_{t\in\errep}$
does not decrease the quantum entropy $\entro$ if and only if
its infinitesimal generator is of the form
\begin{equation} \label{forgeneg}
\gene\sei\opa = - \ima \big[\oph, \opa\big] +
\sum_{k=1}^{\dime^2-1} \gamma_k\Big(\opfk\opa\sei\opfka - \frac{1}{2}
\big(\opfka\opfk\opa+ \opa\sei\opfka\opfk\big)\Big),
\end{equation}
where $\oph$ is a trace-less selfadjoint operator,
$\gamma_1\ge0,\ldots, \gamma_{\dime^2-1}\ge 0$, and $\opfo,\ldots, \opfd$
are traceless operators such that
\begin{equation} \label{orthog}
\big\langle
\opf_{j}^{\phantom{\ast}},\opf_k^{\phantom{\ast}}\big\ranglehs
=\delta_{jk} \fin ,\ \ \ j,k =1,\dots,\dime^2-1 \fin ,
\end{equation}
and
\begin{equation}
\sum_{k=1}^{\dime^2-1} \gamma_k \tre \opfk\opfka = \sum_{k=1}^{\dime^2-1} \gamma_k\tre \opfka\opfk \fin .
\end{equation}
\end{corollary}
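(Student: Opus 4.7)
The plan is to derive Corollary \ref{corgen} by combining Theorem \ref{mainth} with the passage to the so-called diagonal form of the Gorini-Kossakowski-Lindblad-Sudarshan generator, as already sketched in Remark \ref{rekss}. The central point is that Theorem \ref{mainth} pins down the generator up to the choice of a completely positive map $\cpm$ satisfying $\cpm\sei\id=\cpm^\ast\id$; everything else is then a matter of rewriting $\cpm$ in a canonical way.

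More concretely, first I would invoke Theorem \ref{mainth} to rewrite the statement ``$\{\qds\}_{t\in\errep}$ does not decrease $\entro$'' as: the generator $\gene$ has the form (\ref{genentro}), with $\cpm$ completely positive and $\cpm\sei\id=\cpm^\ast\id$. Next, by the Kraus-Stinespring-Sudarshan theorem I would write $\cpm\sei\opa=\sum_{k=1}^{m}\opkk\opa\sei\opkka$. Using the standard freedom in the Kraus decomposition --- namely, the freedom to shift each $\opkk$ by a multiple of $\id$ (compensating the change in the Hamiltonian part by a suitable redefinition of $\oph$, which can again be chosen traceless) and to take unitary combinations of the $\opkk$ --- I would replace $\{\opko,\dots,\opkm\}$ with a family of traceless operators, mutually orthogonal in the Hilbert-Schmidt inner product. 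Completing and relabeling, I can write $\opkk=\sqrt{\gak}\sei\opfk$ with $\gak\ge 0$ and $\opfo,\dots,\opf_{\dime^2-1}$ an orthonormal, traceless system as in (\ref{orthog}), with the understanding that the $\gak$ associated to any discarded operators are zero.

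Substituting this decomposition into (\ref{genentro}) and using $\cpm\sei\id=\cpm^\ast\id$ to replace the Lindblad anticommutator term with $\cpm^\ast\id=\sum_{k}\opfka\opfk$ (multiplied by the corresponding $\gak$), I would obtain precisely the generator in (\ref{forgeneg}). The compatibility condition $\cpm\sei\id=\cpm^\ast\id$ translates, in terms of the Kraus operators $\{\sqrt{\gak}\sei\opfk\}$, into joint normality (\ref{joinor}), which in the diagonal parametrization reads $\sum_{k}\gak\sei\opfk\opfka=\sum_{k}\gak\sei\opfka\opfk$. For the converse direction, starting from a generator of the form (\ref{forgeneg}) with the stated orthogonality and the sum condition, I would set $\cpm\sei\opa\defi\sum_{k}\gak\sei\opfk\opa\sei\opfka$: this map is completely positive (being in Kraus form with coefficients $\gak\ge 0$), and the sum condition is precisely $\cpm\sei\id=\cpm^\ast\id$, so that (\ref{genentro})-(\ref{genentro-bis}) are satisfied and Theorem \ref{mainth} applies.

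The only step that requires some care is the reduction to the diagonal form, i.e., justifying that the simultaneous requirements of tracelessness and Hilbert-Schmidt orthogonality of the Kraus operators can be imposed without affecting either the map $\cpm$ or the overall structure (\ref{genentro}); but this is standard material, covered by the references in Remark \ref{rekss}, and the condition $\cpm\sei\id=\cpm^\ast\id$ is clearly preserved because it is a property of $\cpm$ itself and not of its particular Kraus representation. Once this reduction is in place, the corollary follows essentially by inspection from Theorem \ref{mainth}.
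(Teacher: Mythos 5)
Your proposal is correct and follows essentially the same route as the paper: the paper's proof of Corollary~\ref{corgen} is exactly the one-line reduction to Theorem~\ref{mainth} combined with the passage to the diagonal form of the generator described in Remark~\ref{rekss} (traceless, Hilbert--Schmidt--orthogonal Kraus operators, with the condition $\cpm\sei\id=\cpm^\ast\id$ becoming $\sum_k\gamma_k\tre\opfk\opfka=\sum_k\gamma_k\tre\opfka\opfk$). You have merely spelled out the details that the paper delegates to that remark and its references.
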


\begin{proof}
This result is a straightforward consequence of Theorem~{\ref{mainth}} and of the `diagonal form' of
the infinitesimal generator $\gene$; see Remark~{\ref{rekss}}.
\end{proof}

Let us now consider some examples.

\begin{example}
Consider the quantum dynamical semigroup $\{\qds\colon\bH\rightarrow\bH\}_{t\in\errep}$, with
\begin{equation}
\qds\sei\opa\defi\eee^{-\lambda t} \sum_{n=0}^\infty \frac{(\lambda\tre t)^n}{n!}\otto \unopn\opa\sei\unopan ,
\ \ \ \lambda>0 \fin , \ \ \opu\in\unihh \fin .
\end{equation}
Clearly, this is unital and the associated infinitesimal generator $\gene$ is given by
\begin{equation}
\gene\sei\opa = \lambda\tre (\unop\opa\sei\unopa - \opa) \fin .
\end{equation}
Thus, $\gene$ kills the identity. By Theorem~{\ref{mainth}}, $\{\qds\}_{t\in\errep}$ does not
decrease the entropy $\entro$.
\end{example}

\begin{example} \label{exgesa}
Consider the quantum dynamical semigroup in $\bH$ whose infinitesimal generator is of the form
\begin{equation} \label{forgesa}
\gene\sei\opa = - \ima \big[\oph, \opa\big] +
\sum_{k=1}^{\dime^2-1} \Big(\oplk\opa\sei\oplk -\frac{1}{2}
\big(\oplks\tre\opa+ \opa\sei\oplks\big)\Big),
\end{equation}
where $\oph$ and $\oplo,\ldots, \opld$ are selfadjoint operators. Thus, once again $\gene$ kills the identity, and then
the associated quantum dynamical semigroup does not decrease the entropy $\entro$.
\end{example}

\begin{example} \label{twisem}
The two previous examples can be regarded as special cases of the following.
Let $\{\mut\}_{t\in\errep}$ be a convolution semigroup of probability measures~{\cite{Heyer}}
on a locally compact (second countable, Hausdorff topological) group $G$,
and let $\repr$ be a continuous unitary representation of this group in $\hh$. Then,
$\{\qds\colon\bH\rightarrow\bH\}_{t\in\errep}$, with
\begin{equation} \label{twise}
\qds\sei\opa\defi \int_{G} \mtre \repr(g) \opa \sei \repr(g)^\ast \; \de\mut(g) \fin ,
\end{equation}
turns out to be a quantum dynamical semigroup, a so-called
\emph{twirling semigroup}~{\cite{Aniello1,Aniello2,Aniello3}}.
It can be shown that, in order to achieve a \emph{generic} twirling semigroup,
one can always set $G=\sunig(\dime)$ and choose $\repr$ as the defining representation
(with the obvious identification of $\unig(\dime)$ with $\unig(\hh)$), where $\{\mut\}_{t\in\errep}$ ranges over
the whole set of convolution semigroups of probability measures on $\sunig(\dime)$; see~{\cite{Aniello1}}.
It is clear that $\{\qds\}_{t\in\errep}$ is also unital; hence, it does not decrease the entropy $\entro$.
The generator of this quantum dynamical semigroup is of the form
\begin{equation} \label{fogeneg}
\gene\sei\opa = - \ima \big[\oph, \opa\big] + \sum_{k=1}^{\dime^2-1}
\gamma_k\Big(\opl_k^{\phantom{\ast}}\opa\sei\opl_{k}^{\phantom{\ast}}-\frac{1}{2}
\big(\opl_k^2\tre\opa+ \opa\sei\opl_k^2\big)\Big) \msei +
\gamma_0\big(\randu - \ide\big)\tre \opa \fin ,\ \ \ \opa\in\bH \fin ,
\end{equation}
where $\oph$ is a trace-less selfadjoint operator,
$\oplo,\ldots, \opld$
are trace-less selfadjoint operators such that
\begin{equation} \label{orthogsa}
\big\langle
\opl_{j}^{\phantom{\ast}},\opl_k^{\phantom{\ast}}\big\ranglehs
=\delta_{jk} \fin ,\ \ \ j,k =1,\dots,\dime^2-1 \fin ,
\end{equation}
$\gamma_0,\ldots,\gamma_{\dime^2-1}$ are non-negative numbers and
$\randu$ is a \emph{random unitary map} acting in $\bH$; namely,
it admits a decomposition of the
form
\begin{equation} \label{defirandu}
\randu\sette \opa= \sum_{j=1}^{\den} p_j\sei \uj\opa\otto\ujast \fin ,
\end{equation}
where  $\big\{\uj\big\}_{j=1}^{\den}$ is a set of unitary operators in $\hh$
and $\{p_j\}_{j=1}^{\den}\subset\erreps$ is a probability
distribution. Note that the operators $\uo,\ldots,\un,\oplo,\ldots, \opld$
are normal (hence, jointly normal). Conversely, every operator $\gene$ of the
form~{(\ref{fogeneg})} is the infinitesimal generators of a twirling semigroup
of the form~{(\ref{twise})}.

At this point, it is worth observing that the class of all twirling semigroups
of the form~{(\ref{twise})} turns out to coincide with the class of
\emph{random unitary semigroups} acting in $\bH$~{\cite{Aniello1,Aniello3}}, namely,
of those unital quantum dynamical semigroups whose members are random unitary maps.
On the other hand, it is known~{\cite{Landau}} that every
unital quantum channel acting in $\bH$, for $\dime=\dim(\hh)=2$, is actually
a random unitary map (but for $\dime>2$ this is no longer true).
By these facts and by Theorem~{\ref{mainth}}, we see
that, for $\dime=2$, the class of twirling semigroups actually \emph{coincides}
with the class of quantum dynamical semigroups that do not decrease
the entropy $\entro$.
\end{example}

\begin{example}
Consider a qubit generator of the form
\begin{equation}\label{qubit}
\gene \sei \opa = \gamma_1 \msei\left(\hat{\sigma}_- \sei \opa \sei \hat{\sigma}_+
-\frac{1}{2} \left(\hat{\sigma}_+\sei\hat{\sigma}_- \sei\opa
+ \opa\sei \hat{\sigma}_+\sei\hat{\sigma}_-\right) \right) \msei
+ \gamma_2 \msei\left( \hat{\sigma}_+\sei \opa\sei \hat{\sigma}_-
- \frac{1}{2} \left( \hat{\sigma}_-\sei\hat{\sigma}_+\sei \opa +
\opa\sei \hat{\sigma}_-\sei\hat{\sigma}_+\right) \right) ,
\end{equation}
where $\hat{\sigma}_\pm = \frac{1}{2}(\hat{\sigma}_1 \pm i \hat{\sigma}_2)$
are the standard raising and lowering qubit operators. One easily finds that
$\gene\sei \id = (\gamma_1-\gamma_2)\sei\hat{\sigma}_3$; hence, for $\gamma_1 \neq \gamma_2$,
$\gene\sei \id\neq 0$ and the corresponding temporal evolution $\exp(\gene\sei t)$
may decrease the entropy of some state.
Indeed, if the the entropy $\entro(\hrho(0))$ of the initial state $\hrho(0)$ is strictly larger
than the entropy of the  asymptotic state $\asy$ determined by the probability vector
\begin{equation}
\left(\po = \frac{\gamma_1}{\gamma_1 + \gamma_2},\tre  \pt = \frac{\gamma_2}{\gamma_1 + \gamma_2} \right),
\end{equation}
then, for some $t>0$, $t\mapsto\entro(\hrho(t))$ must be decreasing.
If $\gamma_1=\gamma_2=\gamma$, then (\ref{qubit}) reduces to
\begin{equation}
\gene\sei \opa =  \gamma \left(\frac{1}{2} \sei\hat{\sigma}_1 \tre \opa \sei \hat{\sigma}_1
+ \frac{1}{2} \sei \hat{\sigma}_2 \tre \opa \sei \hat{\sigma}_2 - \opa \right) ,
\end{equation}
which generates a random unitary evolution, and $\asy$ is in this case the maximally mixed state.
\end{example}

\begin{example} Let  $\supro\colon\bH\rightarrow\bH$ be a completely positive, trace-preserving \emph{projection}
--- i.e., $\supro^2 = \supro$ --- and set  $\suprop \defi \ide  - \supro$. Consider the generator $\gene$,
\begin{equation} \label{generator}
\gene \sei\opa \defi \gamma \sei\supro\sei\opa - \frac{\gamma}{2}
\left(\big(\supro^\ast \id\big)\tre\opa+\opa\sette\big(\supro^\ast \id\big)\right) =
- \gamma \sei \suprop \opa \fin ,
\end{equation}
with $\gamma >0$. We stress that $\supro^\ast$ is unital, because the completely positive
map $\supro$ is assumed to be trace-preserving. One easily finds for the evolution
\begin{equation}
\qds = \supro + \eee^{-\gamma t}\sei \suprop .
\end{equation}
The quantum dynamical semigroup $\{\qds\}_{t\in\errep}$ is unital if and only $\suprop$ kills the identity;
equivalently, if and only if $\supro$ itself is unital. Therefore, $\{\qds\}_{t\in\errep}$ does not decrease
the entropy $\entro$ if and only if the completely positive, trace-preserving projection $\supro$ is also
unital.

Suppose, e.g., that $\suproz\tre\opa = \sum_k \orthk \opa \sei\orthk$, where $\{\orthk\}$
are mutually orthogonal (selfadjoint) projectors in the Hilbert space $\hh$ such that $\sum_k \orthk = \id$.
Then, $\suproz$ is a completely positive, trace-preserving, unital projection.
Note, in this regard, that there are projections in $\bH$ that are not unital. Consider, e.g.,
the case where
\begin{equation}
\supro\sei\opa = \tr(\opa)\sei\hrho \fin ,
\end{equation}
$\hrho$ being a \emph{fixed} density operator in $\stah$.
This projection is unital only if $\hrho$ is maximally mixed.
\end{example}

It is worth condensing the discussion concluding Example~{\ref{twisem}} as follows.

\begin{corollary} \label{satur}
For $\dime=\dim(\hh)=2$, a quantum dynamical semigroup $\{\qds\colon\bH\rightarrow\bH\}_{t\in\errep}$
does not decrease the entropy $\entro$ if and only if it is a twirling semigroup. In the integral expression
of such a semigroup of operators, one can always set $G=\sunig(2)$ and choose $\repr$ as the defining representation.
\end{corollary}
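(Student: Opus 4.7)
The plan is to chain together the main theorem with the two structural facts recalled at the end of Example~\ref{twisem}: (i) the twirling semigroups coincide with the random unitary semigroups, and (ii) in dimension two every unital quantum channel is a random unitary map (Landau's theorem).

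For the \emph{if} direction, I would observe that the integral expression~(\ref{twise}) defining a twirling semigroup manifestly sends $\id$ to itself, since $\repr(g)\id\sei\repr(g)^\ast=\id$ and $\mut$ is a probability measure; consequently, $\{\qds\}_{t\in\errep}$ is unital, and Theorem~{\ref{mainth}} immediately gives that it does not decrease $\entro$. This direction holds in any finite dimension and does not require $\dime=2$.

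For the \emph{only if} direction, suppose $\{\qds\}_{t\in\errep}$ does not decrease $\entro$. By Theorem~{\ref{mainth}}, each $\qds$ is unital. Since $\dime=2$, Landau's result~{\cite{Landau}} ensures that every unital quantum channel acting in $\bH$ is a random unitary map of the form~(\ref{defirandu}). Hence $\{\qds\}_{t\in\errep}$ is a random unitary semigroup in the sense of Example~{\ref{twisem}}, and the equivalence with the class of twirling semigroups recalled there yields the desired representation.

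Finally, the assertion about $G$ and $\repr$ is not really a separate step: it is the generic realization of a twirling semigroup recalled in Example~{\ref{twisem}} (one can always take $G=\sunig(\dime)$, with $\repr$ the defining representation, letting $\{\mut\}_{t\in\errep}$ vary over all convolution semigroups of probability measures on $\sunig(\dime)$), specialized to $\dime=2$. The main conceptual obstacle is really hidden in the cited Landau theorem, which fails for $\dime>2$ and is precisely what forces the corollary to be a dimension-two statement; once that input is available, everything follows by assembling results already established in the paper.
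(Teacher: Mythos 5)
Your proposal is correct and follows essentially the same route as the paper, which proves this corollary precisely by condensing the discussion at the end of Example~\ref{twisem}: unitality via Theorem~\ref{mainth}, Landau's theorem to identify unital qubit channels with random unitary maps, and the cited coincidence of random unitary semigroups with twirling semigroups (plus the generic realization on $\sunig(2)$). Your added remarks that the ``if'' direction is dimension-independent and that Landau's theorem is the step that fails for $\dime>2$ are accurate and consistent with the paper's own commentary.
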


\section{Further results: relaxing the complete positivity}
\label{further}

Interestingly, as previously observed --- see Remark~{\ref{onlypos-bis}} ---
the statement of Theorem~{\ref{mainth}} turns out to be partially valid
if the quantum dynamical semigroup $\{\qds\}_{t\in\errep}$ is replaced with a semigroup
of operators $\{\pds\}_{t\in\errep}$ whose members are assumed to be trace-preserving,
positive linear maps (thus, the condition of \emph{complete} positivity being `relaxed').
Precisely, in this more general context we have the following result.

\begin{theorem} \label{possem}
Let $\{\pds\colon\bH\rightarrow\bH\}_{t\in\errep}$ be a (continuous) semigroup of linear maps,
with infinitesimal generator $\gene$. Then, the following properties are equivalent:
\begin{enumerate}

\item
$\{\pds\}_{t\in\errep}$ is positive, trace-preserving and unital;

\item
$\{\pds\}_{t\in\errep}$ is positive, trace-preserving and,
for every $t\ge 0$ and every $\hrho\in\stah$, $\pds\sei\hrho\prec\hrho$;

\item
$\{\pds\}_{t\in\errep}$ is positive, trace-preserving and does not decrease
the entropy $\entro$;

\item
for every set $\big\{\proj\big\}$ of mutually orthogonal rank-one (selfadjoint) projectors in the Hilbert space
$\hh$, such that $\sum_j \proj = \id$,
\begin{equation} \label{posuno}
\tr\big(\proj\big(\gene\sei\prok\big)\big)\ge 0 \fin , \ \ \mbox{for $j\neq k$} \fin ,
\end{equation}
and
\begin{equation} \label{posdue}
\sum_{j=1}^{\dime}\tr\big(\proj\big(\gene\sei\prok\big)\big) = 0
= \sum_{k=1}^{\dime}\tr\big(\proj\big(\gene\sei\prok\big)\big) \fin ;
\end{equation}

\item
for every pair of mutually orthogonal (selfadjoint) projectors $\orthp,\orthq\in\bH$,
\begin{equation} \label{postre}
\big\langle\orthp,\gene\sei\orthq\big\ranglehs\ge 0 \ \ \mbox{and} \ \
\big\langle \id , \gene\sei\orthp\big\ranglehs = 0
= \big\langle \orthp , \gene\sei\id \big\ranglehs \fin .
\end{equation}

\end{enumerate}

\end{theorem}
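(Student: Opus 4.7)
The plan is to split the five-way equivalence into two groups. The block $(1) \Leftrightarrow (2) \Leftrightarrow (3)$ is essentially a rerun of the proof of Theorem \ref{mainth} in the positive (rather than completely positive) setting. By Remark \ref{onlypos}, Lemmas \ref{teclem} and \ref{mainlem} apply verbatim to any trace-preserving positive map, and Remark \ref{onlypos-bis} does the analogous extension of Proposition \ref{cohe} and Lemma \ref{killem} to semigroups of such maps. Applying Lemma \ref{teclem} to each $\pds$ gives $(1) \Leftrightarrow (2)$, and combining the extended Lemma \ref{mainlem} with the extended Proposition \ref{cohe} gives $(1) \Leftrightarrow (3)$.

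For the generator-level characterization $(1) \Leftrightarrow (4) \Leftrightarrow (5)$, I would first decode the equality conditions. Since $\sum_j \proj = \id$ in (4), the sum $\sum_j \tr(\proj(\gene\sei\prok))$ equals $\tr(\gene\sei\prok)$; its vanishing for every rank-one $\prok$ is equivalent, by linearity (rank-one projectors span $\bH$), to $\tr(\gene\sei\opa)=0$ for all $\opa$, i.e., to trace preservation. Similarly, $\sum_k \tr(\proj(\gene\sei\prok)) = \tr(\proj(\gene\sei\id))$ vanishing for every rank-one $\proj$ is equivalent to $\gene\sei\id = 0$, i.e., to unitality (Lemma \ref{killem}). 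The analogous equalities in (5) have the same interpretation. The inequality conditions in (4) and (5) are equivalent to each other by linearity as well: any pair of orthogonal projectors $\orthp,\orthq$ decomposes, in a common orthonormal basis, as sums of rank-one projectors with disjoint index sets, so that $\big\langle\orthp,\gene\sei\orthq\big\ranglehs = \sum_{j,k} \tr(\proj(\gene\sei\prok))$ with each summand non-negative.

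What remains is to show that, under trace preservation and unitality, positivity of all the maps $\pds$ is equivalent to the off-diagonal non-negativity condition on the generator --- essentially the Kossakowski characterization of positive semigroups. Necessity is immediate: if $\pds\ge 0$ then $t \mapsto \tr(\proj(\pds\sei\prok))$ is non-negative and vanishes at $t=0$, so its right derivative $\tr(\proj(\gene\sei\prok))$ at $t=0$ is non-negative. For sufficiency I would argue by contradiction, taking $t_0 > 0$ to be the infimum of times at which $\pds\sei\hrho$ fails positivity for some $\hrho\in\stah$. By continuity $\pdm_{t_0}\sei\hrho \ge 0$ with zero as an eigenvalue; letting $\orthp$ be the projector onto its full kernel and diagonalizing $\pdm_{t_0}\sei\hrho = \sum_l \lambda_l \prok_l$ with $\lambda_l > 0$ and $\prok_l \perp \orthp$, the off-diagonal hypothesis yields
\begin{equation}
\tr\big(\orthp\sei\gene(\pdm_{t_0}\sei\hrho)\big) = \sum_l \lambda_l \sei \tr\big(\orthp(\gene\sei\prok_l)\big) \ge 0 \fin .
\end{equation}
This contradicts the fact that the right derivative of $t \mapsto \tr(\orthp(\pds\sei\hrho))$ at $t_0$ must be non-positive: the function is non-negative for $t \le t_0$, zero at $t_0$, and strictly negative immediately after $t_0$ by the presumed failure of positivity.

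The main obstacle will be making this last contradiction fully rigorous: one must take $\orthp$ as the projector onto the entire kernel of $\pdm_{t_0}\sei\hrho$ (not just onto a single eigenvector) so that every surviving $\prok_l$ is orthogonal to it, handle zero eigenvalues with multiplicity, and justify that the right derivative at $t_0$ is indeed strictly negative whenever the eigenvalue analysis predicts negativity just after $t_0$. Once these technicalities are settled, the full five-way equivalence is assembled by combining the two blocks.
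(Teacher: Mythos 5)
Your treatment of the block $(1)\Leftrightarrow(2)\Leftrightarrow(3)$, of the decoding of the equality conditions in $(4)$ (trace preservation and unitality via polarization), and of the equivalence $(4)\Leftrightarrow(5)$ matches the paper's proof. The one place where you depart from the paper is the positivity criterion itself: the paper simply invokes Theorem~2.1 of Gorini--Kossakowski--Sudarshan for the fact that~{(\ref{posuno})} together with the first equality in~{(\ref{posdue})} characterizes the generators of trace-preserving positive semigroups, whereas you attempt to prove the sufficiency direction from scratch. Your necessity argument is fine, but the sufficiency argument as sketched has a genuine gap, and it is not merely the technical bookkeeping you flag at the end.

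The first-crossing contradiction does not close for two reasons. First, the failure of positivity of $\pds\sei\hrho$ immediately after $t_0$ does \emph{not} imply that $\tr\big(\orthp\big(\pds\sei\hrho\big)\big)$ becomes strictly negative there: $\orthp$ projects onto the whole kernel of $\pdm_{t_0}\sei\hrho$, and if that kernel has dimension greater than one, a single eigenvalue can dip below zero while the trace over the kernel subspace stays non-negative (one emerging eigenvalue can dominate the negative one). Second, even if you track the correct direction --- a limit $\psi$ of unit eigenvectors of the negative eigenvalues of $\pdm_{s_n}\sei\hrho$ as $s_n\downarrow t_0$, which necessarily lies in the kernel --- the conclusion $\langle\psi,\gene\big(\pdm_{t_0}\sei\hrho\big)\psi\rangle\ge 0$ is compatible with equality, and a function with vanishing right derivative at $t_0$ can still become negative immediately afterwards; so no contradiction is reached. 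This is the classical difficulty with invariance of closed cones under flows: cross-positivity of the generator yields invariance only after an additional argument, e.g.\ perturbing $\gene$ by $\epsilon\big(\dime^{-1}\tr(\cdot)\tre\id-\ide\big)$ so that the off-diagonal inequalities become strict (making the first-crossing argument genuinely contradictory) and then letting $\epsilon\downarrow 0$, or passing through the resolvent $(\lambda-\gene)^{-1}$ as in the Schneider--Vidyasagar theorem on cross-positive matrices. Either supply such an argument or, as the paper does, cite the Kossakowski/GKS criterion directly; with that repair the rest of your assembly of the five-way equivalence is correct.
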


\begin{proof}
For proving the equivalence of the first three properties above one proceeds as in
the proof of Theorem~{\ref{mainth}}.
Inequality~{(\ref{posuno})} and the first of equalities~{(\ref{posdue})} are known
(see~{\cite{Gorini}}, Theorem~{2.1}) to form a necessary and sufficient condition
for $\gene$ to be the generator of a semigroup of trace-preserving,
positive linear maps (precisely, condition~{(\ref{posuno})} is equivalent to positivity,
whereas the other condition amounts to preservation of the trace).
Moreover, by the arbitrariness of the set $\big\{\proj\big\}$, the second of
equalities~{(\ref{posdue})} is equivalent to $\gene\sei \id=\sum_k \big(\gene\sei\prok\big)=0$
(note that, by a standard polarization argument, the fact that $\langle\psi,\opa\tre\psi\rangle=0$,
for every $\psi\in\hh$, $\|\psi\|=1$, implies that $\opa=0$).
Hence, by these observations and by Lemma~{\ref{killem}} (which obviously holds for semigroups
of positive maps as well) the fourth property is equivalent to the
previous three. Finally, the equivalence between conditions~{(\ref{posuno})}--{(\ref{posdue})}
and~{(\ref{postre})} is easily seen.
\end{proof}

From the fourth of the equivalent properties listed in Theorem~{\ref{mainth}} and the fifth in
Theorem~{\ref{possem}} one immediately derives the following fact.

\begin{corollary} \label{concon}
The generators of the semigroups of trace-preserving, positive linear maps acting in $\bH$ that are
also unital --- equivalently, that do not decrease the entropy $\entro$ ---
form a convex cone $\ptu$ in the space of linear maps in $\bH$,
which contains another convex cone $\cptu$ formed by the generators
of semigroups of unital, trace-preserving, completely positive linear maps.
\end{corollary}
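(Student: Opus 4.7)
The plan is to read both convex-cone properties off the characterizations already in hand, without any fresh analysis. First, for $\ptu$, I would appeal to the fifth equivalent property of Theorem~\ref{possem}: a linear map $\gene$ belongs to $\ptu$ precisely when, for every pair of mutually orthogonal selfadjoint projectors $\orthp,\orthq\in\bH$, the inequality $\big\langle\orthp,\gene\sei\orthq\big\ranglehs\ge 0$ and the equalities $\big\langle\id,\gene\sei\orthp\big\ranglehs=0=\big\langle\orthp,\gene\sei\id\big\ranglehs$ hold. Each such condition is stable under non-negative scalar multiplication of $\gene$ and under addition of two maps satisfying it, so their intersection --- i.e.\ the set $\ptu$ --- is closed under non-negative linear combinations, hence a convex cone.

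Next, for $\cptu$, I would invoke property~4 of Theorem~\ref{mainth}: every element $\gene\in\cptu$ admits a presentation
\begin{equation*}
\gene\sei\opa = -\ima\big[\oph,\opa\big] + \cpm\sei\opa - \tfrac{1}{2}\big(\big(\cpm\sei\id\big)\tre\opa + \opa\sette\big(\cpm\sei\id\big)\big),
\end{equation*}
with $\oph$ a traceless selfadjoint operator and $\cpm$ a completely positive map satisfying $\cpm\sei\id=\cpm^\ast\id$. The assignment $(\oph,\cpm)\mapsto\gene$ is jointly linear; traceless selfadjoint operators form a real vector space; completely positive maps form a convex cone; and the side constraint $\cpm\sei\id=\cpm^\ast\id$ is itself linear, since $\cpm\mapsto\cpm^\ast$ and evaluation at $\id$ are both linear operations. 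Consequently, if $\gene_1,\gene_2\in\cptu$ correspond to data $(\oph_1,\cpm_1)$ and $(\oph_2,\cpm_2)$, then for $\alpha,\beta\ge 0$ the combination $\alpha\gene_1+\beta\gene_2$ corresponds to the admissible data $(\alpha\oph_1+\beta\oph_2,\alpha\cpm_1+\beta\cpm_2)$, proving that $\cptu$ too is a convex cone.

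The inclusion $\cptu\subseteq\ptu$ is then immediate: a unital, trace-preserving, completely positive map is a fortiori unital, trace-preserving and positive, so each unital CPTP semigroup is a special case of the semigroups characterizing $\ptu$, and its generator belongs to $\ptu$. There is no genuine analytical obstacle in this corollary --- it is a bookkeeping exercise atop Theorems~\ref{mainth} and~\ref{possem} --- the only delicate point being to recognize that \emph{all} the defining conditions (positivity of bilinear forms, trace-preservation identities, complete positivity of $\cpm$, and $\cpm\sei\id=\cpm^\ast\id$) are simultaneously preserved by non-negative linear combinations, so that the cone structures survive intersection with these constraints.
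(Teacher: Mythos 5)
Your proposal is correct and follows exactly the route the paper indicates: the paper derives the corollary "immediately" from the fourth property of Theorem~\ref{mainth} (for $\cptu$) and the fifth property of Theorem~\ref{possem} (for $\ptu$), which is precisely the bookkeeping you carry out. No gaps.
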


A connection with Example~{\ref{twisem}} and with the related Corollary~{\ref{satur}}
is established by the following fact.

\begin{proposition} \label{genertwi}
Let $\{\pds\colon\bH\rightarrow\bH\}_{t\in\errep}$ be a family of linear maps.
$\{\pds\}_{t\in\errep}$ is a semigroup of positive, trace-preserving and unital
maps if and only if
\begin{equation} \label{intefo}
\pds\sei\opa = \int_G
 \mtre \repr(g) \opa \sei \repr(g)^\ast \; \de\sut(g) \fin ,
\end{equation}
where $(\repr,\{\sut\}_{t\in\errep})$ is a pair formed by a continuous unitary representation
$\repr\colon G\rightarrow\unihh$ of a locally compact group $G$ and by a family $\{\sut\}_{t\in\errep}$
of finite, signed Borel measures on $G$ satisfying the following conditions:
\begin{description}

\item[\tt (M1)]
$\suz=\delta$ (Dirac measure at the identity of $G$) and $\sut(G)=1$, for every $t\in\erreps$;

\item[\tt (M2)]
for some (hence, for every) orthonormal basis $\oba\equiv\{\psi_k\}_{k=1}^{\dime}$ in $\hh$,
\begin{equation} \label{inteid}
\lim_{t\downarrow 0} \int_G \vjklm \; \de\sut(g) = \delta_{jk}\cinque \delta_{lm}
\end{equation}
and
\begin{equation} \label{intese}
\int_G \vjklm \; \de(\sus\convo\sut)(g) = \int_G \vjklm \; \de\sust(g) \fin , \ \ \
\forall\quattro s,t\in \erreps \fin ,
\end{equation}
where $\vjklm\defi\langle \psi_j,\repr(g)\tre\psi_k\rangle\langle \repr(g)\tre\psi_l,\psi_m\rangle$
and $\sus\convo\sut$ is the convolution of $\sus$ with $\sut$;

\item[\tt (M3)]
for every orthonormal basis $\oba\equiv\{\psi_k\}_{k=1}^{\dime}$ in $\hh$,
\begin{equation} \label{intepos}
\lim_{t\downarrow 0} \tre t^{-1} \int_{G} \vjkkj \;\de\sut(g)\ge 0 \fin , \ \ \ j\neq k \fin .
\end{equation}

\end{description}

In particular, one can always set $G=\sunig(\dime)$ and choose $\repr$ as the defining representation.

Moreover, a semigroup of trace-preserving positive maps $\{\pds\colon\bH\rightarrow\bH\}_{t\in\errep}$
does not decrease the entropy $\entro$ if and only if it is of the form~{(\ref{intefo})}, for
some locally compact group $G$, some continuous unitary representation $\repr$ of $G$ in $\hh$
and some family $\{\sut\}_{t\in\errep}$ of finite, signed Borel measures on $G$.
\end{proposition}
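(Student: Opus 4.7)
The plan is to extend the analysis of twirling semigroups carried out in~{\cite{Aniello1,Aniello2,Aniello3}} --- where convolution semigroups of \emph{probability} measures produce completely positive dynamical semigroups --- to the present setting, in which \emph{signed} measures must be admitted in order to cover maps that are merely positive. The argument splits naturally into a direct sufficiency check and a more subtle necessity argument; the final clause of the proposition, about non-decrease of $\entro$, will then follow by combining the main equivalence with Theorem~{\ref{possem}}.

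For the sufficiency direction, assume~{(\ref{intefo})} with $(\repr,\{\sut\})$ satisfying (M1)--(M3). The normalization $\sut(G)=1$ and the unitarity of $\repr(g)$ yield at once
\begin{equation*}
\pds\sei\id=\sut(G)\sei\id=\id \fin , \ \ \ \tr(\pds\sei\opa)=\sut(G)\tr(\opa)=\tr(\opa) \fin ,
\end{equation*}
so that $\pds$ is unital and trace-preserving. Expanding both sides of $\pdm_t\tre\pdm_s=\pdm_{t+s}$ in matrix coefficients against an orthonormal basis $\oba\equiv\{\psi_k\}$ of $\hh$ reduces the semigroup law to precisely~{(\ref{intese})}, while $\suz=\delta$ together with~{(\ref{inteid})} delivers strong right-continuity at $t=0$. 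The only non-trivial step is positivity of each $\pds$: the off-diagonal matrix element of the generator between two rank-one projectors $\proj,\prok$ of a resolution of $\id$ equals $\lim_{t\downarrow 0}t^{-1}\int_G\vjkkj\,\de\sut(g)$, so~{(\ref{intepos})} is exactly the positivity criterion~{(\ref{posuno})} of Theorem~{\ref{possem}}; the equalities~{(\ref{posdue})} follow from the unitality and trace preservation already verified.

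For the necessity direction, the key algebraic observation is that, under the action $g\mapsto\repr(g)\cdot\repr(g)^{\ast}$ of $\sunig(\dime)$ on $\bH$, the latter decomposes as $\ccc\tre\id\oplus\bH_0$, where $\bH_0$ is the \emph{irreducible} subspace of traceless operators. By Burnside's density theorem applied isotypically, the complex linear span of $\{\repr(g)\cdot\repr(g)^{\ast}\}_{g\in\sunig(\dime)}$ inside $\mathrm{End}(\bH)$ coincides with the block-diagonal operators with respect to this decomposition. Any trace-preserving \emph{unital} linear map fixes $\id$ and leaves $\bH_0$ invariant, hence is block-diagonal, and therefore lies in this span. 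Consequently, for each $t\ge 0$ there exists a finitely supported signed Borel measure $\sut$ on $\sunig(\dime)$, with $\sut(\sunig(\dime))=1$, realizing~{(\ref{intefo})}; the composition law then forces $\sus\convo\sut$ and $\sust$ to act identically through the integration map, which expanded in matrix coefficients is~{(\ref{intese})}; the initial condition gives $\suz=\delta$; strong continuity gives~{(\ref{inteid})}; and positivity of $\{\pds\}$, via Theorem~{\ref{possem}}, gives~{(\ref{intepos})}.

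The main obstacle will be that the integration map $\su\mapsto\int_G\repr(g)\cdot\repr(g)^{\ast}\,\de\su(g)$ from signed measures to linear maps in $\bH$ has a very large kernel, so that the assignment $t\mapsto\sut$ is far from unique; some care is therefore needed to produce a \emph{measurable} selection with the correct continuity in $t$. A clean device is to fix a bounded linear right-inverse of this integration map on the block-diagonal subspace of $\mathrm{End}(\bH)$ and to define $\sut$ as the image of $\pds$ under that right-inverse, so that strong continuity of $t\mapsto\pds$ is automatically inherited by $t\mapsto\sut$; this construction also makes manifest that $G=\sunig(\dime)$ with its defining representation always suffices. Finally, the entropy-non-decrease characterization is immediate: by Theorem~{\ref{possem}}, non-decrease of $\entro$ within the class of positive, trace-preserving semigroups is equivalent to unitality, and the equivalence just established identifies the unital ones with those admitting the integral representation~{(\ref{intefo})}.
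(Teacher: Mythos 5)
Your sufficiency argument coincides with the paper's: the normalization $\sut(G)=1$ gives unitality and trace preservation, $\suz=\delta$ together with {\tt (M2)} gives the continuous semigroup structure (the partial isometries $|\psi_j\rangle\langle\psi_k|$ forming a basis of $\bH$), and {\tt (M3)} is read off as the Gorini positivity criterion~{(\ref{posuno})} for the generator, the equalities~{(\ref{posdue})} being supplied by trace preservation and unitality. The divergence is in the necessity direction. The paper does not argue representation-theoretically: it invokes Theorem~1 of Mendl and Wolf~{\cite{Mendl}}, by which every positive, trace-preserving, unital map on $\bH$ is a finite \emph{real} affine combination $\sum_m w_m\tre \hat{U}_m(\cdot)\hat{U}_m^\ast$ with $\sum_m w_m=1$, and takes $\sut$ to be the corresponding real combination of Dirac measures on $\sunig(\dime)$; no measurable selection or continuity of $t\mapsto\sut$ is needed, since {\tt (M1)}--{\tt (M3)} constrain each $\sut$ only through integrals, and the convolution identity~{(\ref{intese})} follows from the semigroup law once each $\pds$ has the form~{(\ref{intefo})}. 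Your Burnside route is a genuine alternative and would in principle prove slightly more (positivity would not be needed to obtain the representation, only adjoint-preservation), but as written it has a gap.

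The gap is that you span over $\ccc$. Burnside applied to the irreducible adjoint action on the traceless operators shows that the \emph{complex} linear span of $\{\repr(g)(\cdot)\repr(g)^\ast\}$ is the full block-diagonal algebra, so a generic unital trace-preserving map is a combination $\sum_i c_i\tre\repr(g_i)(\cdot)\repr(g_i)^\ast$ with $c_i\in\ccc$ --- but complex coefficients yield a complex measure, not the finite \emph{signed} Borel measure required by the statement (and a unital trace-preserving map that fails to be adjoint-preserving genuinely cannot be written with real weights). The repair is to work over $\erre$: every positive map is adjoint-preserving (Remark~{\ref{pairi}}), so it suffices to span the real space of adjoint-preserving block-diagonal maps; the adjoint representation of $\sunig(\dime)$ on the traceless \emph{self-adjoint} operators is irreducible over $\erre$ with real commutant $\erre\cdot\mathrm{Id}$, and since the conjugations form a group their real span equals the real algebra they generate, which by the real density theorem is all of the relevant endomorphism space, while the bookkeeping on the $\ccc\tre\id$ block forces $\sum_i c_i=1$. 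With that fix your argument is sound; the bounded right-inverse you introduce to secure measurability of $t\mapsto\sut$ is unnecessary, as the proposition imposes no regularity on the family of measures beyond what the integral identities already encode.
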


\begin{proof}
If a family of linear maps $\{\pds\colon\bH\rightarrow\bH\}_{t\in\errep}$ is of
the form~{(\ref{intefo})} --- for some locally compact group $G$, some continuous
unitary representation $\repr$ of $G$ in $\hh$ and some family $\{\sut\}_{t\in\errep}$
of finite, signed Borel measures on $G$ such that $\sut(G)=1$ --- then these maps
are trace-preserving and unital. If, moreover, $\suz=\delta$, where $\delta$ is the
Dirac measure at the identity of $G$, and both the
conditions~{(\ref{inteid})} and~{(\ref{intese})} in~{\tt (M2)} are satisfied,
it is easy to see that $\{\pds\}_{t\in\errep}$ is a continuous semigroup of operators
(observe that the partial isometries $|\psi_j\rangle\langle\psi_k|$, $j,k\in\{1,\ldots,\dime\}$,
form a basis in $\bH$). If in addition the family of measures $\{\sut\}_{t\in\errep}$  satisfies
condition~{(\ref{intepos})}, we have that, for every orthonormal basis $\{\psi_k\}_{k=1}^{\dime}$
in $\hh$, and for $j\neq k$,
\begin{equation}
\tr\big(\proj\big(\gene\sei\prok\big)\big) = \lim_{t\downarrow 0} \tre t^{-1} \sei
\tr\big(\proj\big((\pds -\ide)\tre\prok\big)\big) = \lim_{t\downarrow 0} \tre t^{-1}
\int_{G} |\langle \psi_j,\repr(g)\tre\psi_k\rangle|^2 \;\de\sut(g)\ge 0 \fin ,
\end{equation}
where $\gene$ is the infinitesimal generator of the semigroup of operators
and $\prok=|\psi_k\rangle\langle\psi_k|$. Thus, recalling that~{(\ref{posuno})}
is the condition expressing the positivity of the semigroup of linear maps
(see the proof of Theorem~{\ref{possem}}), we conclude that $\{\pds\}_{t\in\errep}$
is a semigroup of unital, trace-preserving positive linear maps.

Conversely, let $\{\pds\colon\bH\rightarrow\bH\}_{t\in\errep}$ be a semigroup of this kind. Then,
for every $t\in\errep$, since the linear map $\pds$ is positive, trace-preserving and unital,
it admits a decomposition of the form (see~{\cite{Mendl}}, Theorem~{1}):
\begin{equation}
\pds\sei\opa = \sum_m \wem \sei \um \opa\sei \umast , \ \ \ \{\wem\}\subset\erre, \ \
\sum_m \wem = 1 \fin ,
\end{equation}
where $\big\{\um\big\}$ is a (finite) set of unitary operators in $\hh$. Hence,
for $G=\sunig(\dime)$, choosing $\repr$ as the defining representation and for a suitable
finite, signed Borel measure $\sut$ on $G$ (a linear combination of Dirac measures),
$\pds$ is of the form~{(\ref{intefo})}. Note that $\sut(G)= \sum_m \wem = 1$ and,
arguing essentially as in the first part of the proof, one concludes that by the fact that
$\{\pds\}_{t\in\errep}$ is a continuous semigroup of positive maps conditions~{(\ref{inteid})},
{(\ref{intese})} and~{(\ref{intepos})} must be satisfied too.

At this point, taking into account the equivalence between the first and the third of the
properties listed in Theorem~{\ref{possem}}, the last assertion of the statement is also clear.
\end{proof}

It is worth observing that, as it should be clear from the previous proof, the family of operators
$\{\pds\}_{t\in\errep}$ defined by~{(\ref{intefo})} is a semigroup of operators if and only if
the conditions $\suz=\delta$ and~{\tt (M2)} are satisfied; moreover, if these conditions are verified
then the limit in~{\tt (M3)} exists. Also note that {\tt (M1)}, {\tt (M2)} and~{\tt (M3)}
are satisfied by every convolution semigroup of probability measures $\{\mut\equiv\sut\}_{t\in\errep}$.

As the reader will easily check, conditions~{(\ref{postre})} are satisfied,
in particular, if the generator $\gene$ is of the
form~{(\ref{genentro})}--{(\ref{genentro-bis})}, where now the linear map $\cpm$
is only assumed to be positive (rather than completely positive). Not much is known
about the general structure of positive maps, but a remarkable case is that of
\emph{decomposable} ones, i.e., of those positive linear maps of the form
\begin{equation} \label{deco}
\bH\ni\opa\mapsto \big(\cpmo\tre\opa+\cpmt\big(\coco\tre\opa^\ast\mtre\coco\big)\big),
\end{equation}
where $\cpmo$, $\cpmt$ are completely positive maps and $\coco$ is a complex
conjugation in $\hh$ (a selfadjoint antiunitary operator); otherwise stated, the linear map
$\opa\mapsto\traspo\big(\opa\big)\defi\coco\tre\opa^\ast\mtre\coco$ is a transposition.
For $\dime=\dim(\hh)=2$, the expression~{(\ref{deco})} actually gives the
\emph{general} form of a positive map~{\cite{Stormer}}, as the sum of a completely
positive map and a completely \emph{co-}positive map. Clearly, in this
case the condition $\cpm \sei \id = \cpm^\ast \id$ amounts to imposing that
\begin{equation}
\cpmo \sei \id  + \cpmt \sei \id = \cpmoa \sei \id  + \coco\tre\big(\cpmta\sei\id\big)\coco \fin .
\end{equation}

Thus, by Theorem~{\ref{possem}} a generator $\gene$, associated with a positive map $\cpm$
satisfying the condition $\cpm\sei\id=\cpm^\ast\id$, gives rise to a semigroup of unital,
trace-preserving, positive linear maps. But it can be shown with examples that the reverse
implication does not hold; not even in the simplest case where $\dim(\hh)=2$ --- see Example~{\ref{nonpos}}
\emph{infra} --- and hence the positive maps in $\bH$ coincide with the decomposable ones.

To exhibit the general form --- in the qubit case --- of the generator
of a semigroup of trace-preserving positive maps, it will be convenient to
adopt a Bloch ball approach. Let us then choose the following basis in $\bH$ (orthogonal w.r.t.\
the Hilbert-Schmidt product):
\begin{equation}
\spz\equiv\frac{1}{2}\sei\siz=\frac{1}{2}\sei\id \fin, \  \ \spb\equiv\left(\spo\equiv\frac{1}{2}\sei\sio\tre ,
\ \spt\equiv\frac{1}{2}\sei\sit \tre , \ \sptr\equiv\frac{1}{2}\sei\sitr\right) ,
\end{equation}
where $\siz,\ldots,\sitr$  (in matrix form, w.r.t.\ some orthonormal basis in $\hh$)
are the standard Pauli matrices.

\begin{remark}
Of course, $\spo,\spt,\sptr$ can be defined as `abstract' selfadjoint operators satisfying
the condition
\begin{equation} \label{absdef}
\spj\tre\spk= \frac{1}{4}\sei\delta_{jk}\sei\id + \frac{\ima}{2}\sum_{l=1}^3
\epsilon_{jkl} \sei\spl \fin , \ \ \ j,k\in\{1,2,3\} \fin .
\end{equation}
\end{remark}

For every pair of operators $\opa=\az\tre\spz+ \aaa\cdot\spb$
($\aaa\cdot\spb\equiv\ao\tre\spo+\at\tre\spt+\atr\tre\sptr$),
$\opb=\bz\tre\spz+ \bbb\cdot\spb$ in $\hh$, we have:
\begin{equation} \label{hspro}
\big\langle \opa,\opb\big\ranglehs = \big(\az\tre\bz+\langle\aaa,\bbb\ranglec\big)/2 \fin .
\end{equation}
The operator $\opa$ is positive if and only if $\az=\tr\big(\opa\big)\ge 0$, $\ao,\at,\atr\in \erre$ and
\begin{equation} \label{posco}
\az\ge\|\aaa\|=\sqrt{\tre\langle\aaa,\aaa\ranglec\msei} \ ;
\end{equation}
in particular, for a density operator $\hrho$,
\begin{equation} \label{bloba}
\hrho = \spz + \rrb\cdot\spb,\ \ \ \ru,\rt,\rtr\in\erre \fin ,
\ \ \ \|\rrb\|\le 1 \fin ,
\end{equation}
where the state $\hrho$ is pure if and only if the point $\rrb$ lies on the surface
of the Bloch ball (i.e., on the Bloch sphere $\|\rrb\|= 1$).
Accordingly, we will also consider the associated matrix representation
of a semigroup of operators $\{\pds\colon\bH\rightarrow\bH\}_{t\in\errep}$ and of its
generator $\gene$:
\begin{equation} \label{marep}
\mase_{jk}\defi \tr\big(\spj\big(\pds\sei\spk\big)\big), \ \ \
\mage_{jk}\defi \tr\big(\spj\big(\gene\sei\spk\big)\big) ,
\ \ \ j,k\in\{0,\ldots,3\} \fin .
\end{equation}

\begin{proposition} \label{posetwo}
The general form of a (continuous) semigroup of unital, trace-preserving, positive linear maps
$\{\pds\colon\bH\rightarrow\bH\}_{t\in\errep}$ --- for $\dime=\dim(\hh)=2$ --- is given, in the
matrix representation~{(\ref{marep})}, by
\begin{equation} \label{gefor}
\mase=\exp(\mage\tre t) = \begin{pmatrix} 1 & \zzz \\ \zzz & \eee^{\rema t}\end{pmatrix},
\ \ \ \mage = \begin{pmatrix} 0 & \zzz \\ \zzz & \rema \end{pmatrix},
\end{equation}
where $\rema$ is a $3\times 3$ real matrix such that $\rema + \remat\le 0$.
Therefore, a semigroup of trace-preserving, positive linear maps does not decrease
the entropy $\entro$ if and only if its matrix representation is of the given form.
\end{proposition}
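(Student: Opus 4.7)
My plan is to push the three defining properties of the semigroup (unitality, trace preservation, positivity) through the Bloch-ball parametrization introduced before the statement, and then to extract the condition $\rema+\remat\le 0$ from a standard contraction-semigroup argument.

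First I would observe that $\mage$ is a real $4\times 4$ matrix: indeed $\gene$ is adjoint-preserving (Remark~\ref{pairi}) and $\{\spz,\spo,\spt,\sptr\}$ consists of selfadjoint operators. Trace preservation, i.e.\ $\tr(\gene\sei\opa)=0$ for every $\opa$, combined with the relation $\tr(\opa)=\az$ that follows from $\tr(\spz)=1$ and $\tr(\spj)=0$ for $j=1,2,3$, forces the first row of $\mage$ to vanish. Unitality, i.e.\ $\gene\sei\id=2\sei\gene\sei\spz=0$, forces the first column of $\mage$ to vanish. Exponentiating then yields the block form~\eqref{gefor} with some real $3\times 3$ matrix $\rema$; so far no property beyond unitality and trace preservation has been used.

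The decisive input is positivity. By the criterion~\eqref{posco}, the set of positive trace-one elements of $\bH$ is exactly the Bloch ball $\{\spz+\rrb\cdot\spb:\|\rrb\|\le 1\}$, and by the block structure just derived $\pds$ acts on this ball as the affine map $\rrb\mapsto \eee^{\rema t}\rrb$. Since every positive element of $\bH$ is a non-negative rescaling of such a state, positivity of $\pds$ for every $t\ge 0$ is equivalent to the contraction condition $\|\eee^{\rema t}\|_{\mathrm{op}}\le 1$ on $\rcub$ for every $t\ge 0$.

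The central step is then the classical equivalence
\begin{equation*}
\|\eee^{\rema t}\|_{\mathrm{op}}\le 1\ \ \forall\, t\ge 0\ \ \Longleftrightarrow\ \ \rema+\remat\le 0 \fin .
\end{equation*}
For ``$\Leftarrow$'', differentiating $t\mapsto\|\eee^{\rema t}x\|^2$ yields $\langle \eee^{\rema t}x,(\rema+\remat)\eee^{\rema t}x\rangle_{\rcub}\le 0$, so the norm is non-increasing in $t$; for ``$\Rightarrow$'', the right-derivative at $t=0$ of the non-increasing map $t\mapsto\|\eee^{\rema t}x\|^2$ gives $\langle x,(\rema+\remat)x\rangle_{\rcub}\le 0$ for every $x\in\rcub$. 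I expect this dissipativity criterion to be the only non-trivial point in the proof, but it is entirely standard. The final sentence of the proposition is then immediate from Theorem~\ref{possem}: a semigroup of trace-preserving, positive linear maps does not decrease $\entro$ exactly when it is unital, which by the previous steps is exactly when its matrix representation has the displayed block form with $\rema+\remat\le 0$.
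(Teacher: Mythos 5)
Your proof is correct and follows the same overall strategy as the paper's: derive the block form~{(\ref{gefor})} from adjoint-preservation, trace preservation and unitality, and then identify the condition $\rema+\remat\le 0$ with the non-increase of $t\mapsto\|\eee^{\rema t}\tre\rrb\|^2$ via the derivative at $t=0$ --- exactly the computation~{(\ref{lanese})} that carries the paper's argument in both directions. The one genuine (if small) divergence is in how the forward implication reaches the Euclidean contraction on $\rcub$. The paper does not use positivity directly at that point: it invokes the already-established equivalence of unitality with entropy non-decrease (Theorem~{\ref{possem}}, via Lemma~{\ref{mainlem}} in its positive-map version) applied to $\mathscr{A}_2$, obtaining $\|\pds\sei\hrho\norhs\le\|\hrho\norhs$ and hence, by~{(\ref{hspro})}, the pointwise contraction $\|\eee^{\rema t}\tre\rrb\|\le\|\rrb\|$ on the Bloch ball and, by homogeneity, on all of $\rcub$. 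You instead read positivity itself, through the criterion~{(\ref{posco})} and the observation that every positive operator is a non-negative multiple of a state, as invariance of the Bloch ball, i.e.\ $\|\eee^{\rema t}\|_{\mathrm{op}}\le 1$, which by homogeneity is the same contraction. Both routes are sound; yours is marginally more self-contained for this direction (it does not lean on the entropy machinery, only on the Bloch-ball picture), whereas the paper's makes explicit use of the equivalence it has already proved, which dovetails with the final sentence of the proposition. For the converse implication the two arguments coincide, and your explicit derivation of the vanishing first row and column of $\mage$ from trace preservation and unitality is a detail the paper merely asserts.
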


\begin{proof}
Since $\{\pds\}_{t\in\errep}$ is positive (hence, adjoint-preserving),
trace-preserving and unital, its matrix representation
w.r.t.\ the orthogonal basis $(\spz,\ldots,\sptr)$ in $\bH$ is of the form~{(\ref{gefor})}, so that
\begin{equation}
\exp(\gene\sei t) \sei \opa = \az\tre\spz + \big(\eee^{\rema t}\tre\aaa\big)\cdot\spb  \fin ,
\end{equation}
where $\rema$ is a real matrix. We know, moreover, that $\{\pds\}_{t\in\errep}$
does not decrease the entropy $\nep$; in particular, for $\pap=2$. Therefore, for every $t>0$ and
$\hrho\in\stah$, we have that $\|\pds\sei\hrho\norhs\le\|\hrho\norhs$ and hence,
taking into account~{(\ref{hspro})},
\begin{equation}
1+\big\langle\eee^{\rema t}\tre\rrb,\eee^{\rema t}\tre\rrb\big\rangler =
2\sei\langle\pds\sei\hrho,\pds\sei\hrho\ranglehs\le 2\sei\langle\hrho,\hrho\ranglehs
=1+\langle\rrb,\rrb\rangler \fin ,
\end{equation}
where $\rrb$ is the vector in $\rcub$ associated with $\hrho$, see~{(\ref{bloba})}.
It follows that for every $\rrb$ in the Bloch ball --- thus, by linearity, for every $\rrb\in\rcub$ ---
\begin{equation} \label{lanese}
\big\langle\rrb,\big(\rema + \remat\big)\tre\rrb\big\rangler =
\dert\sei\big\langle\eee^{\rema t}\tre\rrb,\eee^{\rema t}
\tre\rrb\big\rangler\sei\Big\vert_{t=0}\le 0 \fin .
\end{equation}
Since the real matrix $\rema + \remat$ is symmetric, relation~{(\ref{lanese})} means
that $\rema + \remat\le 0$.

Conversely, if the matrix representation of the semigroup of operators
$\{\pds\}_{t\in\errep}$ in $\bH$ is of the form~{(\ref{gefor})} --- where $\rema$ is a real matrix
satisfying the condition $\rema + \remat\le 0$ --- then the semigroup is unital, trace-preserving and,
for every $\rrb\in\rcub$ and $t\ge 0$,
\begin{equation}
0\ge \big\langle\eee^{\rema t}\tre\rrb,\big(\rema +
\remat\big)\cinque\eee^{\rema t}\tre\rrb\big\rangler =
\dert\sei\big\langle\eee^{\rema t}\tre\rrb,\eee^{\rema t}
\tre\rrb\big\rangler \fin  .
\end{equation}
Thus, the function $\errep\ni t\mapsto\big\langle\eee^{\rema t}\tre\rrb,\eee^{\rema t}\tre\rrb\big\rangler$
is not increasing and, by condition~{(\ref{posco})}, $\{\pds\}_{t\in\errep}$ is also positive.
\end{proof}

\begin{example} \label{nonpos}
Consider the case of a qubit generator $\gene$ is of the form
\begin{equation}
\gene\sei\opa=\sum_{k=1}^3 (\gak-\gamma)\tre \big\langle\spk,\opa\big\ranglehs\otto\spk \fin ,
\ \ \ \gao,\gat,\gatr\in\erre \fin , \ \ \ \gamma\equiv\gao+\gat+\gatr \fin .
\end{equation}
Note that $\gene$ is adjoint-preserving, and $\gene=\gene^\ast$;
i.e., $\big\langle \gene\sei\opa ,\opb \big\ranglehs=\big\langle \opa , \gene\sei\opb \big\ranglehs$.
It is also clear that $\gene\sei\id=0 =\gene^\ast\id$,
and hence $\gene$ generates a unital, trace-preserving and adjoint-preserving
dynamics. Moreover, the associated real matrix $\mage$ --- see~{(\ref{marep})}--{(\ref{gefor})} ---
is characterized by the $3\times 3$ symmetric sub-matrix $\rema$ of the form
\begin{equation}
\rema=-\frac{1}{2}\sei\diag\big(\gat+\gatr,\gao+\gatr,\gao+\gat\big) .
\end{equation}
Imposing the constraints
\begin{equation} \label{constra}
\gao+\gat\ge 0\fin ,\ \ \gao+\gatr\ge 0\fin , \ \ \gat+\gatr\ge 0 \fin ,
\end{equation}
amounts to assuming that $\rema$ is a negative semidefinite matrix.
Therefore, according to Corollary~{\ref{posetwo}}, $\gene$ is the generator
of a semigroup of unital, trace-preserving, \emph{positive} linear maps if and
only if the inequalities~{(\ref{constra})} hold.
On can easily check that this generator can be expressed in the canonical form
\begin{equation}
\gene\sei\opa = \cpm\sei\opa - \frac{1}{2}
\left(\big(\cpm\sei \id)\tre\opa+\opa\sette\big(\cpm\sei \id\big)\right) ,
\ \ \ \mbox{with} \ \cpm \sei \id = \cpm^\ast \id \fin ,
\end{equation}
where, however, in general the linear map $\cpm$ is neither completely positive nor positive. Indeed,
it turns out that
\begin{equation}
\cpm\sei\opa = \sum_{k=1}^3 \gak\sei \spk\tre\opa\sei\spk =
\frac{1}{4}\sei\gamma\sei\tr\big(\opa\big)\tre\spz +
\frac{1}{2}\sum_{k=1}^3 (2\gak -\gamma)\tre
\big\langle\spk,\opa\big\ranglehs\otto\spk \fin .
\end{equation}
Then, $\cpm$ is completely positive if and only if $\gao,\gat,\gatr\ge 0$.
Moreover, given the pair of mutually orthogonal rank-one projectors
$\spz+\sptr$ and $\spz-\sptr$, we find that
\begin{equation}
\cpm\sei(\spz+\sptr)= \frac{1}{4}\sei\gatr \big(\spz+\sptr\big) + \frac{1}{4}\sei(\gao+\gat) \big(\spz-\sptr\big).
\end{equation}
Therefore, it is clear that, in general, the map $\cpm$ is not even positive (e.g., for $\gao\ge\gat>0$
and $-\gat\le\gatr<0$).
\end{example}

The preceding example illustrates a simple subclass of the whole class of qubit generators of semigroups of
trace-preserving positive maps that do not decrease the entropy $\entro$.

\begin{proposition} \label{furpro}
For $\dim(\hh)=2$, the general form of the generator of a semigroup of unital, trace-preserving, positive linear maps
--- i.e., of a semigroup of trace-preserving, positive linear maps that does not decrease $\entro$ --- is
given by
\begin{equation} \label{geppo}
\gene\sei\opa = - \ima \sum_{j=1}^3 \hj \tre\big[\spj, \opa\big] +
\sum_{j,k=1}^3 \kajk\left(\spj\tre\opa\sei\spk-\frac{1}{2}\big(\spj\sei\spk\tre\opa+\opa\sei\spj\sei\spk\big)\right),
\end{equation}
where $\hon,\htw,\htr\in\erre$, the $3\times 3$ matrix $\copoma\defi(\kajk)$ is such
that $\kajk=\kakj\in\erre$ and --- setting
\begin{equation}
\ko\equiv\kat+\katr \fin , \ \kt\equiv\kao+\katr \fin , \ \ktr\equiv\kao+\kat \fin ,
\ a\equiv-\kattr \fin , \ b\equiv-\kaotr \fin , \ c\equiv-\kaot
\end{equation}
--- the associated symmetric real matrix
\begin{equation} \label{prema}
\mathscr{P}\defi
\begin{pmatrix}
\ko & c & b \\
c & \kt & a \\
b &  a & \ktr
\end{pmatrix}
\end{equation}
is positive semidefinite. Thus, such a generator $\gene$ is selfadjoint --- $\gene=\genea$
--- if and only if $\hon=\htw=\htr=0$.

The semigroup of linear maps associated with $\gene$ is, in particular,
completely positive if and only if the symmetric real matrix $\copoma$ is positive semidefinite.
\end{proposition}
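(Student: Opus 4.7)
My plan is to invoke Proposition~\ref{posetwo} to reduce the statement to the following claim: every $3\times 3$ real matrix $\rema$ satisfying $\rema+\remat\le 0$ corresponds to a unique pair $\bigl((\hon,\htw,\htr),\copoma\bigr)$, with $\copoma$ real symmetric, such that the generator in~(\ref{geppo}) has matrix representation $\mage=\begin{pmatrix}0 & \zzz\\ \zzz & \rema\end{pmatrix}$. Splitting $\rema=\rema^S+\rema^A$ into symmetric and antisymmetric parts, the constraint $\rema+\remat\le 0$ is equivalent to $\rema^S\le 0$, while $\rema^A$ is unrestricted. The plan is to match $\rema^A$ with the Hamiltonian parameters and $\rema^S$ with $\copoma$, and to translate the constraint into $\mathscr{P}\ge 0$.

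For the Hamiltonian piece I will use the commutation relation $[\spj,\spk]=\ima\sum_l\epsilon_{jkl}\spl$ implied by~(\ref{absdef}); a short computation then gives $-\ima\sum_{j=1}^3\hj[\spj,\spl]=\sum_m\bigl(\sum_j\hj\epsilon_{jlm}\bigr)\hat{S}_m$, so the matrix element on $\spl$ equals $\sum_j\hj\epsilon_{jlm}$, manifestly antisymmetric in $(m,l)$. As $\hbo$ ranges over $\rcub$ this covers every $3\times 3$ real antisymmetric matrix bijectively.

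The core of the argument is the analogous computation for the dissipator. Expanding $\spj\spl\spk$ and $\{\spj\spk,\spl\}$ by means of~(\ref{absdef}), the identity $\spj\spk+\spk\spj=\frac12\delta_{jk}\id$ combined with the symmetry $\varkappa_{jk}=\varkappa_{kj}$ forces every stray multiple of $\id$ (coming from $\epsilon$-type contractions) to cancel upon summation, leaving
\begin{equation*}
\sum_{j,k=1}^3\varkappa_{jk}\Bigl(\spj\spl\spk-\tfrac12(\spj\spk\spl+\spl\spj\spk)\Bigr)=\tfrac12\sum_m\bigl(\varkappa_{ml}-(\tr\copoma)\delta_{ml}\bigr)\hat{S}_m,
\end{equation*}
so $\rema^S=\tfrac12\bigl(\copoma-(\tr\copoma)I\bigr)$. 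Reading off entries identifies $-2\rema^S$ with the matrix $\mathscr{P}$ of~(\ref{prema}): off-diagonal entries are $-2\rema^S_{jk}=-\varkappa_{jk}$, matching $-\kaot,-\kaotr,-\kattr$, and diagonal entries are $-2\rema^S_{jj}=(\tr\copoma)-\varkappa_{jj}=\sum_{k\neq j}\varkappa_{kk}$, matching $\ko,\kt,\ktr$. Hence $\rema^S\le 0\Longleftrightarrow\mathscr{P}\ge 0$; since the map $\copoma\mapsto\copoma-(\tr\copoma)I$ is plainly invertible on real symmetric $3\times 3$ matrices, the parameterization is bijective. I expect this algebraic bookkeeping, in particular the cancellation of the stray $\id$-terms, to be the main obstacle.

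The selfadjointness assertion then follows immediately: the HS-adjoint of $\opa\mapsto-\ima[\oph,\opa]$ is its negative when $\oph=\oph^\ast$, while the HS-adjoint of the dissipator, obtained by swapping $\spj$ and $\spk$ in each summand, returns the dissipator itself after relabeling $j\leftrightarrow k$ and using $\varkappa_{jk}=\varkappa_{kj}$; hence $\gene=\genea$ iff $\hbo=\zzz$. Finally, for the complete positivity claim one recognizes~(\ref{geppo}) as an instance of the canonical form~(\ref{forgene}) with $\cpm\sei\opa=\sum_{j,k}\varkappa_{jk}\spj\opa\spk$ (the requirement $\cpm\sei\id=\cpm^\ast\id$ being automatic from the symmetry of $\copoma$); the standard Choi-type characterization then gives that $\cpm$ is completely positive precisely when $\copoma\ge 0$, which by the GKSL theorem is equivalent to the semigroup being completely positive.
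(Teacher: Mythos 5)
Your proposal is correct, and it reproduces the substance of the paper's argument — the explicit Bloch-matrix $\rema$ of the generator~{(\ref{geppo})} (your $\rema^S+\rema^A$ is exactly the matrix~{(\ref{showrema})}), the identification $-(\rema+\remat)=\mathscr{P}$, and the reduction to Proposition~{\ref{posetwo}} — but it differs in how the \emph{generality} of the form~{(\ref{geppo})} is established. The paper first invokes a classical result (Gorini et al., Lemma~2.3) to assert that every generator of a trace-preserving, adjoint-preserving semigroup in dimension two is of the form~{(\ref{geppo-bis})} with $(\kajk)$ Hermitian, and then shows that unitality forces $\kajk=\kakj\in\erre$; you instead bypass that citation entirely by checking that $(\hbo,\copoma)\mapsto\rema$ maps $\rcub\times\mathrm{Sym}(3,\erre)$ bijectively onto all real $3\times 3$ matrices (antisymmetric part from the Hamiltonian, symmetric part $\tfrac12(\copoma-(\tr\copoma)I)$ from the dissipator), and then use the faithfulness of the matrix representation to conclude that every generator with Bloch matrix $\mathsf{diag}(0,\rema)$ is already of the form~{(\ref{geppo})}. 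This makes the proof self-contained at the cost of the algebraic bookkeeping you carry out, which I have verified (including the cancellation of the $\epsilon_{jkl}\sei\id$ terms against the symmetric $\kajk$). Two small imprecisions, neither of which affects the conclusion: the Hilbert--Schmidt adjoint of $\opa\mapsto\spj\tre\opa\sei\spk$ is that map itself (only the anticommutator terms get $j$ and $k$ swapped), and you tacitly use that $\gene\sei\id=0$ and that $\gene\sei\spl$ has no $\id$-component in order to get the block form $\mathsf{diag}(0,\rema)$ — both are immediate from your expansion but are worth a sentence. For the complete-positivity claim, note (as the paper does implicitly) that the matrix $\copoma$ is uniquely determined by $\gene$ once the traceless orthogonal basis $\spo,\spt,\sptr$ is fixed, so the Choi criterion applies to \emph{this} $\copoma$ and not merely to some representation of $\gene$.
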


\begin{proof}
Exploiting a classical result (see~{\cite{Gorini}}, Lemma~{2.3}), one can easily check that,
for $\dim(\hh)=2$, the generator of a semigroup of trace-preserving and adjoint-preserving linear maps is of the form
\begin{equation} \label{geppo-bis}
\gene\sei\opa = - \ima \sum_{j=1}^3 \hj \tre\big[\spj, \opa\big] +
\sum_{j,k=1}^3 \kajk\left(\spj\tre\opa\sei\spk-\frac{1}{2}\big(\spk\sei\spj\tre\opa+\opa\sei\spk\sei\spj\big)\right),
\end{equation}
where $\hon,\htw,\htr\in\erre$ and $\kajk\in\ccc$, with $\kajk=\overline{\kakj}\fin$, $j,k\in\{1,2,3\}$
(note that the generator of a semigroup of positive linear maps must be adjoint-preserving).
Moreover, the semigroup is unital if and only if $\gene\sei\id=0$; i.e., specifically if and only if
$\kajk=\kakj\fin$. Therefore, for $\dim(\hh)=2$, the general form of the generator of a semigroup of unital, trace-preserving
and adjoint preserving linear maps is given by~{(\ref{geppo})}, with $\hon,\htw,\htr\in\erre$ and $\kajk=\kakj\in\erre$
(note that, since the matrix $(\kajk)$ is symmetric, the expressions~{(\ref{geppo})} and~{(\ref{geppo-bis})} are equivalent).
According to Proposition~{\ref{posetwo}}, the semigroup is also positive if and only if the real matrix $\rema$
(see~{(\ref{marep})} and~{(\ref{gefor})}), which by direct computation is given by
\begin{equation} \label{showrema}
\rema =
\begin{pmatrix}
-(\kat+\katr)/2 & \kaot/2-\htr & \kaotr/2+\htw \\
\kaot/2+\htr & -(\kao+\katr)/2 & \kattr/2-\hon \\
\kaotr/2-\htw &  \kattr/2+\hon & -(\kao+\kat)/2
\end{pmatrix} ,
\end{equation}
is such that $-(\rema + \remat)\ge 0$, where $-(\rema + \remat)$ is precisely the matrix~{(\ref{prema})}.
Finally, the linear map $\opa\mapsto\sum_{j,k=1}^3 \kajk\tre\spj\tre\opa\sei\spk$
--- equivalently, the semigroup of operators itself --- is completely positive if
and only if the symmetric real matrix $(\kajk)$ is positive semidefinite.
\end{proof}

\begin{remark}
By a well known classical result, a symmetric square matrix is positive semidefinite if and
only if all its principal minors are non-negative. Therefore, the matrix~{(\ref{prema})}
--- i.e., the matrix $\poma=-(\rema + \remat)$ --- is positive semidefinite if and only if
\begin{equation}
\ko, \kt , \ktr \ge 0 \fin , \
\ko\tre\kt \ge c^2 , \ \ko\tre\ktr \ge b^2 , \ \kt\tre\ktr \ge a^2 , \
\ko\tre\kt\tre\ktr + 2abc \ge \ko\tre a^2 + \kt\tre b^2 + \ktr\tre c^2  \fin .
\end{equation}
This set of inequalities is equivalent to the following:
\begin{equation}
\ko + \kt + \ktr \ge 0 \fin , \
\ko\tre\kt  +  \ko\tre\ktr  + \kt\tre\ktr \ge a^2 + b^2 + c^2 , \
\ko\tre\kt\tre\ktr + 2abc \ge \ko\tre a^2 + \kt\tre b^2 + \ktr\tre c^2  \fin .
\end{equation}
Of course, an analogous set of inequalities is necessary and sufficient for the symmetric real matrix
$\copoma\defi(\kajk)$ to be positive semidefinite.
\end{remark}

\begin{remark} \label{nopointed}
Note that the convex cone $\ptu$, $\dim(\hh)=2$, which is parametrized by the pair $\{\hbo,\poma\}$
--- with $\hbo=(\hon,\htw,\htr)\in\rcub$, $\poma$ positive semidefinite ---
is not `pointed' because
\begin{equation} \label{lineality}
\ptu \cap (-\ptu) = \left\{\ima\sei {\textstyle \sum_{j=1}^3} \hj \tre
\big[\spj,(\cdot)\big]\colon\ \hbo\in\rcub\right\};
\end{equation}
i.e., $\ptu$ is a `wedge'. Moreover, by inspecting the expression~{(\ref{showrema})},
it is clear that the matrix $\rema$ in~{(\ref{gefor})} is symmetric if and only if
the Hamiltonian component of the generator is zero ($\Leftrightarrow \gene=\genea$).
Clearly, the selfadjoint generators in $\ptu$ form a pointed convex cone.
\end{remark}

\begin{remark} \label{desgene}
Exploiting the fact that a symmetric real matrix can be diagonalized by a similarity transformation with
a matrix of $\rot$ and the covering homomorphism of $\sunig(2)$ onto $\rot$, one easily realizes that
the general form, for $\dim(\hh)=2$, of the generator of a semigroup of unital, trace-preserving, positive linear maps
is given by~{(\ref{geppo})} where $\hon,\htw,\htr\in\erre$ and $\copoma\defi(\kajk)=\diag(\gamma_1,\gamma_2,\gamma_3)$,
with $\gamma_1,\gamma_2,\gamma_3\in\erre$ satisfying~{(\ref{constra})}, and now the triple $\spo,\spt,\sptr$ is not fixed, but consists of
arbitrary selfadjoint operators verifying condition~{(\ref{absdef})}. Therefore, in this sense Example~{\ref{nonpos}}
(with condition~{(\ref{constra})}) turns out to describe the general form of a \emph{selfadjoint} qubit generator of
a semigroup of trace-preserving positive maps that does not decrease the entropy $\entro$.
\end{remark}

A further characterization, in dimension two, of the semigroups of
trace-preserving positive maps that do not decrease the entropy $\entro$ is the following.

\begin{proposition}
Let $\{\pds\colon\bH\rightarrow\bH\}_{t\in\errep}$ be a semigroup of
linear maps, with $\dim(\hh)=2$, and fix some $k\in\{1,2,3\}$ and
some complex conjugation $\coco$ in $\hh$. Then, the following facts are equivalent:
\begin{enumerate}

\item
$\{\pds\}_{t\in\errep}$ is positive, trace-preserving and unital;

\item
for some pair $(\repr,\{\sut\}_{t\in\errep})$ --- consisting of a Borel map $\repr\colon \lcs\rightarrow\unihh$,
where $\lcs$ is locally compact space, and of a family $\{\sut\}_{t\in\errep}$ of finite, signed Borel measures
on $\lcs$, with $\|\sut\|=\suta(\lcs)=1$, where $\suta$ is the total variation of $\sut$ --- the following
expression holds:
\begin{eqnarray}
\pds\sei\opa \spa & = & \spa \tr\big(\opa\big)\tre\spz +
2\sei\big\langle\spk,\opa\big\ranglehs
\int_{\lcs} \mtre \repr(x)\tre \spk \sei \repr(x)^\ast \; \de\sut(x)
\nonumber \\ \label{genpo}
& + & \spa
\motto\sum_{k\neq l\in\{1,2,3\}} \mdodici 2\sei\big\langle\spl,\opa\big\ranglehs
\int_{\lcs} \mtre \repr(x)\tre \spl \sei \repr(x)^\ast \; \de\suta(x) \fin ;
\end{eqnarray}

\item
for some triple $(\repr,\{\mut\}_{t\in\errep},\{\nut\}_{t\in\errep})$ --- consisting of a Borel map $\repr\colon \lcs\rightarrow\unihh$,
where $\lcs$ is locally compact space, and of two families $\{\mut\}_{t\in\errep}$, $\{\nut\}_{t\in\errep}$ of finite, positive Borel measures
on $\lcs$, with $(\mut +\nut)(\lcs)=1$ --- the following expression holds:
\begin{equation} \label{gentwise}
\pds\sei\opa = \int_{\lcs} \mtre \repr(x) \opa \sei \repr(x)^\ast \; \de\mut(x)
+ \int_{\lcs} \mtre \repr(x)\tre \big(\coco\tre\opa^\ast\mtre\coco\big) \tre \repr(x)^\ast \; \de\nut(x) \fin .
\end{equation}

\end{enumerate}
In particular, one can always set $\lcs=G\equiv\sunig(2)$ and choose $\repr\colon G\rightarrow\unihh$ as the defining
representation. Moreover, the semigroup of operators $\{\pds\}_{t\in\errep}$ is completely positive
(trace-preserving and unital) if and only if the finite measures $\{\sut\}_{t\in\errep}$ or $\{\mut\}_{t\in\errep}$
in~{(\ref{genpo})} or~{(\ref{gentwise})}, respectively, can be assumed to be probability measures,
and in this case $\{\pds\}_{t\in\errep}$ is a twirling semigroup.

Thus, a semigroup of trace-preserving positive maps $\{\pds\colon\bH\rightarrow\bH\}_{t\in\errep}$,
for $\dim(\hh)=2$, does not decrease the entropy $\entro$ if and only if it is of one of the equivalent
forms~{(\ref{genpo})} or~{(\ref{gentwise})}.
\end{proposition}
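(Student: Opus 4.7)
The plan is to prove the cycle $(3) \Rightarrow (1) \Rightarrow (3)$, then establish $(2) \Leftrightarrow (3)$ by an explicit Pauli-basis calculation, and finally derive the CP assertion and invoke Theorem~\ref{possem}. Two qubit-specific facts are essential: Størmer's theorem (referenced in the discussion of equation~(\ref{deco})) stating that every positive linear map on $\bH$ with $\dim(\hh)=2$ decomposes as the sum of a completely positive and a completely co-positive map, and the Landau--Streater theorem cited in Example~\ref{twisem} that every unital quantum channel on $\bH$ is a convex combination of unitary conjugations.

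For $(3) \Rightarrow (1)$, a direct verification suffices: $\repr(x)\opa\repr(x)^*$ and $\repr(x)(\coco\opa^*\coco)\repr(x)^*$ are positive when $\opa \ge 0$, both operations preserve the trace and fix $\id$, and $(\mut+\nut)(\lcs)=1$ yields trace-preservation and unitality of $\pds$. For $(1) \Rightarrow (3)$, I fix $t$ and apply Størmer's theorem to write $\pds$ as $\cpmo + \cpmt \circ \traspo$ for completely positive $\cpmo, \cpmt$; by normalizing these via their trace values, both become, up to positive weights $\lambda_t, 1-\lambda_t \in [0,1]$, unital trace-preserving channels whose Landau--Streater decompositions into finite convex combinations of unitary conjugations assemble into finite positive Borel measures $\mut, \nut$ on $G = \sunig(2)$ with $(\mut+\nut)(G) = 1$, against the defining representation $\repr$.

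For $(2) \Leftrightarrow (3)$, I would choose $\coco$ so that $\traspo$ reverses $\spk$ and fixes $\spl$ for $l \neq k$. Expanding $\opa = \tr(\opa)\spz + 2\sum_j \big\langle \spj, \opa \big\ranglehs \spj$ and plugging into~(\ref{gentwise}) yields, after collecting terms,
\begin{equation*}
\pds\opa = \tr(\opa)\spz + 2 \big\langle \spk, \opa \big\ranglehs \int_\lcs \repr(x)\spk\repr(x)^*\, d(\mut-\nut)(x) + 2 \sum_{l\neq k} \big\langle \spl, \opa \big\ranglehs \int_\lcs \repr(x)\spl\repr(x)^*\, d(\mut+\nut)(x).
\end{equation*}
This matches~(\ref{genpo}) upon setting $\sut := \mut - \nut$, with $\suta = \mut+\nut$ provided $\mut \perp \nut$; the converse direction applies the Hahn--Jordan decomposition $\sut = \sut^{+} - \sut^{-}$ to realize~(\ref{gentwise}) via $\mut := \sut^{+}$, $\nut := \sut^{-}$.

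The main technical obstacle is ensuring mutual singularity of the $(\mut, \nut)$ arising from the Størmer--Landau--Streater construction, since without it $\suta \neq \mut + \nut$. I would address this either by enlarging $\lcs$ to the disjoint union $G \sqcup G$ and placing $\mut, \nut$ on separate copies, or by exploiting the non-uniqueness of the Kraus decomposition to perturb any overlapping atoms inside $\sunig(2)$ to distinct unitaries. The CP assertion then follows: $\pds$ is completely positive iff one may take $\nut \equiv 0$, whereupon~(\ref{gentwise}) reduces to the twirling expression~(\ref{twise}) with $\{\mut\}$ a family of probability measures, so that $\{\pds\}_{t\in\errep}$ is a twirling semigroup. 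The closing sentence combines the equivalences above with the equivalence of properties 1 and 3 in Theorem~\ref{possem}.
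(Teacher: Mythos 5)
Your overall architecture is the same as the paper's: decompose a unital, trace-preserving positive qubit map into a completely positive and a completely co-positive part, each realized by a positive measure on $\sunig(2)$ against the defining representation; pass between the signed-measure form and the pair-of-positive-measures form via the Jordan decomposition; and read off complete positivity from $\nut=0$. Your $(2)\Leftrightarrow(3)$ computation in the Pauli basis is correct and is exactly the paper's, and your explicit worry about mutual singularity (needed for $\suta=\mut+\nut$) is legitimate --- the paper handles the direction $(2)\Rightarrow(3)$ cleanly via the Hahn--Jordan decomposition, as you do, and in the direction $(1)\Rightarrow(2)$ it builds $\sut$ as a signed combination of Dirac measures, where your ``perturb coinciding atoms'' remark addresses a point the paper passes over quickly.

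There is, however, a genuine gap in your $(1)\Rightarrow(3)$ step, which is where the real content of the hard direction lies. St{\o}rmer's theorem gives $\pds=\cpmo+\cpmt\circ\traspo$ with $\cpmo,\cpmt$ completely positive, but it gives \emph{no} control over $\cpmo\sei\id$, $\cpmo^\ast\id$, $\cpmt\sei\id$, $\cpmt^\ast\id$ individually: unitality and trace preservation of $\pds$ only constrain the \emph{sums} $\cpmo\sei\id+\cpmt\sei\id=\id$ and $\cpmo^\ast\id+\coco\tre(\cpmt^\ast\id)\tre\coco=\id$. So ``normalizing by trace values'' does not turn $\cpmo$ and $\cpmt$ into multiples of unital trace-preserving channels, and the Landau--Streater theorem (which applies only to unital channels) cannot be invoked on each summand. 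What is actually needed --- and what the paper invokes directly, citing the same reference~[Landau] --- is the stronger structure theorem that a unital, trace-preserving \emph{positive} map on $2\times 2$ matrices is already a convex combination $\sum_j p_j\sei\uj(\traspo^{\varepsilon(j)}\opa)\tre\ujast$ of unitary conjugations and unitarily conjugated transpositions, i.e.\ a decomposition in which each piece comes pre-normalized. That statement does not follow from St{\o}rmer plus a normalization; its proof requires a separate argument (e.g.\ a Birkhoff-type decomposition of the $3\times3$ Bloch matrix into signed diagonal orthogonal matrices). Either cite that result, as the paper does, or supply that argument; as written, your chain St{\o}rmer $\rightarrow$ normalize $\rightarrow$ Landau--Streater does not close.
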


\begin{proof} According to a classical result~{\cite{Landau}}, every unital,
trace-preserving positive map in $\bH$, for $\dime=\dim(\hh)=2$, is of the form
\begin{equation} \label{utraprepo}
\randut\colon\bH\ni\opa\mapsto\sum_{j=1}^{\den} p_j\sei
\uj\mtre\big(\traspo^{\varepsilon(j)}\opa\big)\tre\ujast \fin ,
\end{equation}
where $\{p_j\}_{j=1}^{\den}\subset\erreps$ is a probability distribution,
$\big\{\uj\big\}_{j=1}^{\den}$ is a set of unitary operators in $\hh$,
$\traspo$ is a (fixed) transposition in $\bH$ and
$\varepsilon\colon\{1,\ldots,\den\}\rightarrow\{0,1\}$ is a map determining
its exponent ($\traspo^0\equiv\ide$). The map $\randut$ is completely positive
if and only if one can set $\varepsilon\equiv 0$; i.e., if and only if it is random unitary.
Let us fix the transposition $\traspo$ in such a way that, for some $k\in\{1,2,3\}$,
\begin{equation} \label{fixtra}
\traspo\tre\spk=-\spk \fin , \ \ \ \traspo\tre\spl=\spl \fin ,
\ \ k\neq l\in\{0,1,2,3\} \fin .
\end{equation}
Thus, we have that
\begin{equation}
\randut\big(\spk\big)= \sum_{j=1}^{\den} (-1)^{\varepsilon(j)} \tre p_j\sei
\uj\spk\sei\ujast \fin , \ \ \
\randut\big(\spl\big)= \sum_{j=1}^{\den} p_j\sei\uj\spl\sei\ujast
\fin , \ \ l\neq k \fin .
\end{equation}
Hence, for a suitable (real) linear combination $\su$ of Dirac measures on the group $G=\sunig(2)$
--- with $\|\sut\|=\suta(G)=1$ --- we have that
\begin{equation} \label{regi}
\randut\big(\spk\big)= \int_{G} \mtre \repr(g)\tre \spk \sei \repr(g)^\ast \; \de\su(g) \fin , \ \ \
\randut\big(\spl\big)= \int_{G} \mtre \repr(g) \tre\spl \sei \repr(g)^\ast \; \de\sua(g)
\fin , \ \ l\neq k \fin ,
\end{equation}
where $\repr\colon G\rightarrow\unihh$ is the defining representation.

Suppose now that, given a linear map $\randut\colon\bH\rightarrow\bH$,
relations~{(\ref{regi})} hold with the group $G$ replaced by some locally compact space
$\lcs$, and for some Borel map $\repr\colon \lcs\rightarrow\unihh$ and some finite, signed Borel measure
$\su$ on $\lcs$, such that $\|\su\|=\sua(\lcs)=1$. Let $\su=\supo-\sune$, with $\supo\perp\sune$
(i.e., the positive measures $\supo$ and $\sune$ are mutually singular),
be the Jordan decomposition of the signed measure $\su$,
so that $\sua=\supo+\sune$. We then have:
\begin{equation} \label{intsu}
\randut\big(\spj\big)= \int_{\lcs} \mtre \repr(x) \tre\spj \sei \repr(x)^\ast \; \de\supo(x)
+ \int_{\lcs} \mtre \repr(x)\tre \big(\traspo\tre\spj\big) \tre \repr(x)^\ast \; \de\sune(x) \fin ,
\end{equation}
where $\traspo$ is a transposition such that relations~{(\ref{fixtra})} hold. Since $\spz,\ldots,\sptr$
is a basis in $\bH$, setting $\mu\equiv\supo$, $\nu\equiv\sune$ (hence, $\mu(\lcs)+\nu(\lcs)=1$), we find:
\begin{equation} \label{intmunu}
\randut\tre\opa = \int_{\lcs} \mtre \repr(x) \opa \sei \repr(x)^\ast \; \de\mu(x)
+ \int_{\lcs} \mtre \repr(x)\tre \big(\coco\tre\opa^\ast\mtre\coco\big) \tre \repr(x)^\ast \; \de\nu(x) \fin ,
\ \ \ \opa\in\bH \fin ,
\end{equation}
where $\coco$ is a complex conjugation such that $\traspo\tre\opa=\coco\tre\opa^\ast\mtre\coco$.
Observe that every other complex conjugation in $\hh$ is of the form $\opu\coco=\coco\sei\opua$,
for some suitable $\opu\in\unihh$. Hence, using the fact that $\mu\perp\nu$,
one can easily check that~{(\ref{intmunu})} holds with $\coco$
replaced by any other complex conjugation, for a suitable redefinition of the map $\repr$ on a
$\mu$-negligible Borel subset $\bores$ of $\lcs$. Therefore, one gets a decomposition of $\randut$ of the form~{(\ref{intmunu})}
for \emph{every} fixed complex conjugation $\coco$. A similar observation holds true in the case where
$\lcs=G=\sunig(2)$ and $\repr\colon G\rightarrow\unihh$ is the defining representation;
but, in this case, one can leave the representation $\repr$ unchanged and replace the measure $\nu$
by a suitable right translate of this measure.

Assume, next, that $\randut$ is a linear map in $\bH$ such that a decomposition of the form~{(\ref{intmunu})}
holds, where now $\mu$, $\nu$ is a generic pair of finite positive Borel measures on $\lcs$ such that
$\mu(\lcs)+\nu(\lcs)=1$. Then, the map $\randut$ is positive, trace-preserving and unital.
Moreover, by a well known result in the theory of integration of vector-valued functions,
if $\mu\neq 0$ the first integral on the rhs of~{(\ref{intmunu})} belongs, up to a positive factor,
to the closed convex hull of the range of the integrand:
\begin{equation} \label{intraun}
\mu(\lcs)^{-1}\int_{\lcs} \mtre \repr(x) \opa \sei \repr(x)^\ast \; \de\mu(x)
\in \clco\big\{\opu\opa\sei\opua\colon\opu\in\repr(\lcs)\subset\unihh\big\} \fin .
\end{equation}
It follows that $\mu(\lcs)^{-1}\int_{\lcs} \mtre \repr(x)\tre (\cdot) \tre \repr(x)^\ast \; \de\mu(x)$ is
a random unitary map --- hence, completely positive --- and conversely, since $\dim(\hh)=2$, every
unital, trace-preserving, completely positive map in $\bH$ is random unitary~{\cite{Landau}};
hence, of the form indicated in~{(\ref{intraun})}, for some suitable pair  $(\repr,\mu)$, where $\mu$ is
a finite positive measure. Similarly, if $\nu\neq 0$ the second integral on the rhs of~{(\ref{intmunu})},
up to a positive factor (i.e., $\nu(\lcs)^{-1}$), amounts to applying the composition of a random unitary map
with a transposition. Thus, by decomposition~{(\ref{intmunu})} we see that $\randut$ is a convex combination
(with weights $\mu(\lcs)$ and $\nu(\lcs)$) of two unital trace-preserving maps:
the first one completely positive, the second one completely co-positive. Hence, it is a unital,
trace-preserving positive map.
In particular, $\randut$ is completely positive if and only if one can set $\nu=0$ in~{(\ref{intmunu})},
which amounts to assuming that $\mu$ (i.e., $\su$ if~{(\ref{intmunu})} is derived  from~{(\ref{intsu})})
is a probability measure.

By the previous arguments, one can easily check that the first property of the semigroup of
linear maps $\{\pds\}_{t\in\errep}$, listed in the statement to be proved, implies
the second, which implies the third. Then, the third property implies the first one,
and all other assertions are proven too.
\end{proof}

\section{Final remarks and conclusions}
\label{conclusions}

In this paper, we have considered the problem of characterizing a (finite-dimensional) quantum
evolution, governed by a quantum dynamical semigroup, that does not decrease a quantum entropy $\entro$,
for every initial state. The results that we have obtained hold for the von~Neumann, the Tsallis
and the R\'enyi entropies (the latter two, for any value of the parameter $q>0$), and for
a family of functions of density operators which are directly related to the Schatten norms.
All these quantities are not decreased, for every initial state, if and only if the
quantum dynamical semigroup is unital (Theorem~{\ref{mainth}}); i.e.,
if and only if the maximally mixed state $\mms$ is a stationary state
(thus the result does not depend on the choice of a particular type of entropy).
In terms of the infinitesimal generator $\gene$, this is equivalent to
the condition that the operators appearing in the Kraus-Stinespring-Sudarshan decomposition~{(\ref{kssfo})}
of the completely positive map that characterizes $\gene$ --- see~{(\ref{forgene})} ---
be jointly normal; i.e., satisfy relation~{(\ref{joinor})}. Banks, Susskind and Peskin~{\cite{Banks}}
found the selfadjointness of these operators to be a sufficient condition. We now know
that this condition amounts to selecting a special subclass (Example~{\ref{exgesa}})
out of the class of twirling semigroups (Example~{\ref{twisem}}), which is itself contained in the
whole class of unital quantum dynamical semigroups, and actually saturates it
for $\dime=\dim(\hh)=2$ (Corollary~{\ref{satur}}).

An interesting consequence of Theorem~{\ref{mainth}} is the fact that a quantum dynamical semigroup
enjoys the property of not decreasing any of the mentioned entropies if and only if the adjoint semigroup
--- w.r.t.\ the Hilbert-Schmidt scalar product --- is a quantum dynamical semigroup itself
(Corollary~{\ref{coradj}}). This property turns out to be also equivalent to the property of not
\emph{increasing} the purity (Corollary~{\ref{corpur}}).

Taking into account both Remark~{\ref{symschu}} and Remark~{\ref{symschu-bis}},
it is worth noting that the proof of the main result actually relies on the following properties
of the quantum entropies we have considered:
\begin{description}

\item{\tt (E1)}
the entropy $\entro\colon\stah\rightarrow\erre$ attains its (strict global) maximum value at $\mms$;

\item{\tt (E2)}
$\entro$ is Schur concave w.r.t.\ the natural majorization relation in $\stah$, namely,
\begin{equation} \label{schuconca-gen}
\ho \prec \hrho \ \defar \ \vepho\prec\veprho
\ \Longrightarrow \ \entro (\ho) \ge \entro (\hrho) \fin .
\end{equation}

\end{description}

It is clear, therefore, that the same results we have found will also hold
for any analogous quantity verifying the given `axioms'. E.g., if
$\acca$ is a real function on $\erren$, strictly decreasing (alternatively, strictly increasing)
w.r.t.\ each of its arguments, and $\go,\ldots,\gn$ are strictly convex
(respectively, strictly concave) continuous real functions on the interval $[0,1]$, then
\begin{equation} \label{genty}
\stah\ni\hrho\mapsto \acca(\tr(\go(\hrho)),\ldots,\tr(\gn(\hrho)))\ifed\entro(\hrho)
\end{equation}
satisfies the aforementioned axioms. Indeed --- arguing as in~{\cite{Marshall}}, chapter~3, C.1 ---
one shows that the map $\stah\ni\hrho\mapsto\tr(\gj(\hrho))=\sum_k \gj(\vepkrho)$, $j=1,\ldots,n$, is
strictly Schur convex (respectively, strictly Schur concave); so that, by the previous
assumptions on $\acca$, the function~{(\ref{genty})} is strictly Schur concave. Hence, it
verifies both axioms {\tt (E1)} and {\tt (E2)}. Note that the set of Schur concave functions of the general
type~{(\ref{genty})} contains all the generalized entropies considered, e.g., in~{\cite{Zozor}}
(in the context of the study of entropy-like uncertainty relations) and, in particular,
all the entropies considered in this paper, with the exception of $\neinf$
(which is, however, a point-wise limit of a family of functions of this kind, see~{(\ref{ainf})}).
For instance, in the case of the von~Neumann entropy, we have:
$n=1$, $\gi\colon [0,1]\ni \xi\mapsto -\xi\ln \xi$ (concave) and $\acca(x)=x$ (increasing);
while, for the Tsallis entropy $\tseq$, $q\neq 1$,
$\gi\colon [0,1]\ni \xi\mapsto \xi^q$ (concave for $q<1$, convex for $q>1$)
and $\acca(x)=(x-1)/(1-q)$ (increasing for $q<1$, decreasing for $q>1$).

In sect.~{\ref{further}}, we have investigated generalizations of the results
obtained in sect.~{\ref{main}} by considering the case of positive, but
not necessarily completely positive, dynamical semigroups. In this setting,
Theorem~{\ref{possem}} can be regarded as a generalization of Theorem~{\ref{mainth}},
and Proposition~{\ref{genertwi}} provides a characterization of the
positive dynamical semigroups that do not decrease a quantum entropy
by means of an integral expression which can be regarded as a generalization
of the formula defining a twirling semigroup. It seems therefore quite
natural to call such a semigroup of operators a \emph{generalized twirling semigroup}.
According to Corollary~{\ref{concon}}, the infinitesimal generators of the
generalized twirling semigroups acting in $\bH$ form a convex cone $\ptu$.
In the case where $\dim(\hh)=2$, the cone $\ptu$ can be described in detail;
see Proposition~{\ref{posetwo}}, Example~{\ref{nonpos}}, Proposition~{\ref{furpro}},
Remark~{\ref{nopointed}} and Remark~{\ref{desgene}}. Finally, Proposition~{\ref{posetwo}}
provides further integral expressions, for $\dim(\hh)=2$, of the semigroups of trace-preserving
positive maps.


\section*{Acknowledgments}

D.C.\ was partially supported by the National Science Center project 2015/17/B/ST2/02026.



\end{document}